%% LaTeX Template for ITW 2021
%%
%% by Stefan M. Moser, October 2017
%% 
%% derived from bare_conf.tex, V1.4a, 2014/09/17, by Michael Shell
%% for use with IEEEtran.cls version 1.8b or later
%%
%% Support sites for IEEEtran.cls:
%%
%% http://www.michaelshell.org/tex/ieeetran/
%% http://moser-isi.ethz.ch/manuals.html#eqlatex
%% http://www.ctan.org/tex-archive/macros/latex/contrib/IEEEtran/
%%

%\documentclass[draft cls,onecolumn]{IEEEtran}
%\IEEEoverridecommandlockouts
%% depending on your installation, you may wish to adjust the top margin:

%% LaTeX Template for ITW 2021
%%
%% by Stefan M. Moser, October 2017
%% 
%% derived from bare_conf.tex, V1.4a, 2014/09/17, by Michael Shell
%% for use with IEEEtran.cls version 1.8b or later
%%
%% Support sites for IEEEtran.cls:
%%
%% http://www.michaelshell.org/tex/ieeetran/
%% http://moser-isi.ethz.ch/manuals.html#eqlatex
%% http://www.ctan.org/tex-archive/macros/latex/contrib/IEEEtran/
%%

\documentclass[conference,letterpaper]{IEEEtran}
\IEEEoverridecommandlockouts
%% depending on your installation, you may wish to adjust the top margin:
\addtolength{\topmargin}{9mm}

%%%%%%
%% Packages:
%% Some useful packages (and compatibility issues with the IEEE format)
%% are pointed out at the very end of this template source file (they are 
%% taken verbatim out of bare_conf.tex by Michael Shell).
%
% *** Do not adjust lengths that control margins, column widths, etc. ***
% *** Do not use packages that alter fonts (such as pslatex).         ***
%

\usepackage{mathtools,amssymb,lipsum, nccmath}

\usepackage{cuted}
\setlength\stripsep{3pt plus 1pt minus 1pt}

\usepackage{lipsum}
\usepackage{mathtools}
\usepackage{cuted}

\usepackage{graphicx}
\usepackage{verbatim}
\usepackage{cite}
\usepackage{amsmath,amssymb,amsfonts,amsthm,stmaryrd}
\usepackage{algorithmic}
\usepackage{graphicx}
\usepackage{tikz}
\usepackage{graphicx}
\usepackage{bbm}
\usepackage{bm}
\usepackage{textcomp}
\usepackage{xcolor}
\usepackage{graphicx}
\usepackage{pgfplots}
\usepackage{cite}
\usepackage{systeme}

\usetikzlibrary{patterns}
\usepackage{bigints}
\usepackage{color}
\usepgfplotslibrary{fillbetween}

\usepackage{soul}
\usepackage{bm}
\usepackage{textcomp}
\usepackage{xcolor}
\usepackage{subcaption} 

\DeclareMathOperator*{\argmin}{arg\,min}
\def\BibTeX{{\rm B\kern-.05em{\sc i\kern-.025em b}\kern-.08em
    T\kern-.1667em\lower.7ex\hbox{E}\kern-.125emX}}

\newtheorem{theorem}{Theorem}
\newtheorem{remark}{Remark}
\newtheorem{definition}{Definition}

\newtheorem{lemma}{Lemma}

%%%%

\newcommand{\N}{\mathbb{N}}

\newcommand{\R}{\mathbb{R}}
\newcommand{\E}{\mathbb{E}}

\usepackage[utf8]{inputenc} 
\usepackage[T1]{fontenc}
\usepackage{url}
\usepackage{ifthen}
\usepackage{cite}

%\usepackage[cmex10]{amsmath} % Use the [cmex10] option to ensure complicance
                             % with IEEE Xplore (see bare_conf.tex)
%% Please note that the amsthm package must not be loaded with
%% IEEEtran.cls because IEEEtran provides its own versions of
%% theorems. Also note that IEEEXplore does not accepts submissions
%% with hyperlinks, i.e., hyperref cannot be used.

\interdisplaylinepenalty=2500 % As explained in bare_conf.tex
%%%%%%
% correct bad hyphenation here
\hyphenation{op-tical net-works semi-conduc-tor}

% ------------------------------------------------------------
\begin{document}

\title{Strategic Communication via Cascade Multiple-Description Network}

% %%% Single author, or several authors with same affiliation:
 
 \author{\IEEEauthorblockN{Rony Bou Rouphael and Ma\"{e}l Le Treust%\IEEEauthorrefmark{1} 
\thanks{%\IEEEauthorrefmark{1}
Ma\"el Le Treust gratefully acknowledges financial support from INS2I CNRS, DIM-RFSI, SRV ENSEA, UFR-ST UCP, INEX Paris Seine Initiative and IEA Cergy-Pontoise. This research has been conducted as part of the project Labex MME-DII (ANR11-LBX-0023-01).} 
}\\
\IEEEauthorblockA{
ETIS UMR 8051, CY Cergy-Paris Université, ENSEA, CNRS,\\
6, avenue du Ponceau, 95014 Cergy-Pontoise CEDEX, FRANCE\\
Email: \{rony.bou-rouphael ; mael.le-treust\}@ensea.fr}\\
 }

\maketitle

%%%%%%
%% Abstract: 
%%
\begin{abstract}
In decentralized decision-making problems, agents choose their actions based on locally available information and knowledge about decision rules or strategies of other agents.   
We consider a three-node cascade network with an encoder, a relay and a decoder, having distinct objectives captured by cost functions. In such a cascade network, agents choose their respective strategies sequentially, as a response to the former agent's strategy and in a way to influence the decision of the latter agent in the network.
We assume the encoder commits to a strategy before the communication takes place. Upon revelation of the encoding strategy, the relay commits to a strategy and reveals it. The communication starts, the source sequence is drawn and processed by the encoder and relay. Then, the decoder observes a sequences of symbols, updates its Bayesian posterior beliefs accordingly, and takes the optimal action. This is an extension of the Bayesian persuasion problem in the Game Theory literature. 
In this work, we provide an information-theoretic approach to study the fundamental limit of the strategic communication via three-node cascade network. 
Our goal is to characterize the optimal strategies of the encoder, the relay and the decoder, and study the asymptotic behavior of the encoder's minimal long-run cost function.
%We study the information-theoretic limits of the Bayesian persuasion game between autonomous devices through a cascade multi-user network with successive encoders' commitments. At each stage, information is transmitted between one committed encoder and two decoders, one of which has access to the observation of the other. The communication channel is perfect between each encoder and its corresponding decoders at all stages. Intermediary devices are endowed with a pair of strategies each, one for encoding and another for decoding. Each player is equipped with a distinct and arbitrary cost function. We study the strategic source coding problem at each stage, in which the encoder commits to an encoding while its corresponding decoders select the sequences of symbols that minimize their respective long-run cost functions. We characterize the optimal encoder cost value by considering successive refinement coding with respect to a specific probability distribution which involves two auxiliary random variables, and captures the incentive constraints of both decoders.
\end{abstract}

%\textit{A full version of this paper is accessible at:}
%\url{https://arxiv.org/pdf/2105.06201.pdf} 

%\lipsum[1-2]

\section{Introduction} 

We study a decentralized decision-making problem with restricted communication between three agents with non-aligned objectives.  %optimality is not determined by the precision of the conveyed information, but it rather refers to attaining an objective subject to the challenges imposed by the channel's bandwidth.
As depicted in Fig. \ref{fig:Cascademulti}, we consider a Cascade channel where information travels from a strategic encoder to a decoder through a strategic relay. 
We are interested in designing an achievable multiple description coding scheme that minimizes the encoder's long run cost function subject to the challenges imposed by the Cascade channel. 

The problem of strategic communication originally emerged in the game theory literature to address situations in economics (lobbying, advertising, sales, negotiations, etc.).  %This problem, of game theoretic nature, has become a multi-disciplinary subject of study due to its numerous applications in several fields. 
%This paper extends our work in \cite{rouphael2021strategic} to the case where each decoder observes a private and a public signal, and cost functions depend on the actions of both decoders. \\ 
The game was referred to as the sender-receiver game, and communication was assumed to be perfect and unconstrained by any limits on the amount of information transmitted. The Nash equilibrium solution of the cheap talk game was investigated by Crawford and Sobel in their seminal paper 
\cite{crawford1982}, in which the encoder and the decoder are endowed with distinct objectives and choose their coding strategies simultaneously. The Stackelberg version of the strategic communication game, referred to as the Bayesian persuasion game, was formulated by Kamenica and Gentzkow in \cite{KamenicaGentzkow11}, where the encoder is the Stackelberg leader and the decoder is the Stackelberg follower choosing its strategies as a response to the encoder's strategy
. In this paper, we assume that the encoder commits to an encoding and announces its commitment before observing the source. Then, the relay commits to and announces a strategy accordingly. If the relay was assumed to commit to a strategy before the encoder, then the problem boils down to a strategic joint source-channel coding of Shannon, like the one investigated in \cite{jet}. 
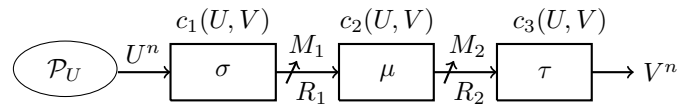
\begin{figure}

\centering
  \begin{tikzpicture}[xscale=2.8,yscale=1.5]
%\draw[thick,-](0,0)--(0.5,0)--(0.5,0.5)--(0,0.5)--(0,0);
\node[below, black] at (0.25,0.45) {$\mathcal{P}_U$};
\draw[thick,->](0.5,0.25)--(0.75,0.25);
\draw[thick,-](0.75,0)--(1.25,0)--(1.25,0.5)--(0.75,0.5)--(0.75,0);
\draw[thick,->](1.25,0.25)--(1.55,0.25);
\draw[thick,-](1.55,0)--(2,0)--(2,0.5)--(1.55,0.5)--(1.55,0);
\draw[thick,-](2.3,0)--(2.75,0)--(2.75,0.5)--(2.3,0.5)--(2.3,0);
\draw[thick,->](2,0.25)--(2.3,0.25);
\node[below, black] at (1,0.4) {$\sigma$};
\node[below, black] at (1.78,0.4) {$\mu$};
%\draw[thick,-](2,0.25)--(2.10,0.25)--(2.10,0.85);
\draw (0.25,0.28) circle (7pt);
%\draw (1.79,0.28) circle (7pt);
\draw[thick, ->](2.75,0.25)--(2.95,0.25);
%\draw[thick,-](2,0.25)--(2.10,0.25)--(2.10,-0.45);
%\draw[dashed, ->](2.65,-0.45)--(2.65,0);
%\draw[thick,-](2.25,0.45)--(2.55,0.45)--(2.55,1.25)--(2.25,1.25)--(2.25,0.45);
%\node[below, black] at (2.4,1.10) {$c_1$};
%\draw[thick,-](2.25,-0.75)--(2.55,-0.75)--(2.55,0)--(2.25,0)--(2.25,-0.75);
%\node[right, black] at (2.65,-0.25) {$Z^n$};
\draw[thick,->](1.3,0.15)--(1.35,0.35);
\draw[thick,->](2.05,0.15)--(2.095,0.35);
\node[right, black] at (2.45,0.25) {$\tau$};
\node[above, black] at (1.78,0.5) {$c_2(U,V)$};
\node[above, black] at (2.6,0.5) {$c_3(U,V)$};
\node[above, black] at (1,0.5) {$c_1(U,V)$};
\node[right, black] at (2.95,0.25) {$V^n$};
\node[above, black] at (0.62,0.25) {$U^n$};
\node[above, black] at (1.4,0.3) {$M_1$};
\node[above, black] at (2.16,0.3) {$M_2$};
\node[below, black] at (1.42,0.25) {$R_1$};
\node[below, black] at (2.17,0.25) {$R_2$};
    \end{tikzpicture} 
 \caption{Strategic Source Coding for Cascade Channel with Successive Commitment. %XXX add the cost functions to the figure
}
    \label{fig:Cascademulti}
\end{figure}
Cascade source coding consists of compressing a source sequence through an intermediate or relay node which then reconstructs the source and transmits it to the next node. In \cite{cascade1}, Yamamoto considered the source coding problem for cascade and branching communication systems, and established the region of achievable rates for cascade systems and bounds for the branching systems. Lossy source coding for cascade communication systems was also considered in \cite{cascade2} where both the relay and the terminal node have access to side information and wish to reconstruct the source with certain fidelities. %The cascade source coding framework is a building block for various compression and communication scenarios. It captures major aspects of multihop coding for wireless communication networks \cite{cascade3}, including cellular communication \cite{cascade4}, ad hoc networks \cite{cascade5}, and sensor networks \cite{cascade6}. 

In Bayesian persuasion, the encoder is considered to be an information designer. In \cite{koesslerlaclau}, information design with multiple designers interacting with a set of agents is studied.   
In \cite{sar1}, \cite{SaritasFurrerGeziciLinderYukselISIT2019}, the Nash equilibrium solution is investigated for multi-dimensional sources and quadratic cost functions, whereas the Stackelberg solution is studied in \cite{sar2}. The computational aspects of the persuasion game are considered in \cite{dughmi}. 
In several recent contributions \cite{voraachievable}, \cite{vorakulkarniinformationextraction}, \cite{vorakulkarni2021optimalquestionnaires}, Vora and Kulkarni addressed the problem of extracting truthful information from a strategic sender with an incentive to misreport information.  Modeled as a Stackelberg game in which the decoder is the Stackelberg leader, authors investigate the region of achievable rates of strategic communication between agents with distinct utility functions. 
The strategic communication problem with a noisy channel is investigated in \cite{AkyolLangbortBasar15}, \cite{akyol2017information}, \cite{LeTreustTomala(Allerton)16}, \cite{jet}, and %four different scenarios of strategic communication are studied in 
\cite{pointtopoint}. The case where the decoder privately observes a signal correlated to the state, also referred to as the Wyner-Ziv setting \cite{wyner-it-1976}, is studied in \cite{akyol2016role}, \cite{corsica2020} and \cite{LeTreustTomalaISIT21}.

In this paper, we study the Bayesian persuasion game via a Cascade multiple description network. %The encoder $\mathcal{E}$ commits to and reveals an encoding strategy before observing the source. Each commitment of the encoder induces a Bayesian game among the decoders $\mathcal{D}_1$ and $\mathcal{D}_2$. Since the set of information policies is compact, the Bayesian game admits Bayes-Nash equilibria \cite{Yu1999EssentialEO}. 
%We assume that decoders will select the pair of output sequences that minimizes their respective costs and maximizes the encoder's cost.
%update their Bayesian beliefs about the source sequence and select the output sequence that minimizes their respective cost functions. 
The objectives of the players are captures by distinct cost functions that depend on the source and the action taken by the decoder.
For some particular cases, we are able to characterize the encoder's optimal cost obtained with strategic cascade multiple description coding %with respect to the distribution that involves three auxiliary random variables, and 
that satisfies the decoder's incentives constraints.

\subsection{Notations}

%\subsubsection{Notations}

%Let $\mathcal{E}$ denote the encoder and $\mathcal{D}_i$ denote the decoder $i \in \{1,2\}$. 
Let $n \in \N^{\star}=\mathbb{N}\backslash\{0\}$ denote a sequence block length.
We denote by $U^n$ the $n$-sequences of random variables of source information $u^n=(u_{1},...,u_{n}) \in \mathcal{U}^n $, and by $V^n$ the sequences of decoders' %$\mathcal{D}_i$ 
actions $v^n \in \mathcal{V}^n$. Sequences $M_1$ and $M_2$ denote the channel inputs of encoders $1$ and $2$ respectively.  Calligraphic fonts $\mathcal{U}$, %$\mathcal{X}_1$, $\mathcal{X}_2$ 
 and $\mathcal{V}$ denote the alphabets and lowercase letters $u$ 
 and $v$ denote the realizations. %We assume that $|\mathcal{U}|=|\mathcal{X}_1|=|\mathcal{X}_2|$.
 %%such that $\mathcal{P}_{U^n,Z^n_1,Z_2^n} %= \mathcal{P}(U^n)\mathcal{P}(Z_1^n|U^n)\mathcal{P}(Z_2^n|Z_1,U^n).$
%The private observations $Z_1$ of decoder $\mathcal{D}_1$ and $Z_2$ of decoder $\mathcal{D}_2$ are correlated to $(U_0,U_1,U_2)$ according to the conditional probability distribution  $\mathcal{P}_{Z_1,Z_2|U}$.
%The i.i.d. memory-less channel distribution will be denoted by $\mathcal{T}_{Y_1,Y_2|X}$ such that $\mathcal{T}_{Y_1^n,Y_2^n|X^n}(Y_1^n,Y_2^n|X^n)=\prod_{t=1}^n \mathcal{T}_{Y_{1,t},Y_{2,t}|X_t}(Y_{1,t},Y_{2,t}|X_t).$\\
%Posterior beliefs of decoder $\mathcal{D}_i$ follow the probability distribution $\mathcal{Q}_i(u|z_i)$ over $\ \mathcal{U}\times\mathcal{Z}_i$. The joint probability distribution $\mathcal{Q}_i(u,z_1,z_2) \in \Delta(\mathcal{U}\times\mathcal{Z}_1\times\mathcal{Z}_2)$ decomposes as follows: $\mathcal{Q}_i(U^n,Z_1^n,Z_2^n) = \mathcal{Q}_i(U^n)\mathcal{Q}_i(Z_1^n|U^n)\mathcal{Q}_i(Z_2^n|Z_1,U^n).$\\
%Posterior beliefs of decoder $\mathcal{D}_1$ follow the probability distribution $\mathcal{Q}^1_{U^n|M_1M_2} \in \Delta(\mathcal{U}^n)$, and beliefs of decoder $\mathcal{D}_2$ follow the probability distribution $\mathcal{Q}^2_{U^n|M_2} \in \Delta(\mathcal{U}^n)$. 
\begin{comment}
The joint probability distribution $\mathcal{Q}_{U^n,W^n_1,W^n_2} \in \Delta(\mathcal{U}^n\times\mathcal{W}^n_1\times\mathcal{W}^n_2)$ decomposes as follows $\mathcal{Q}^i_{U^n,W_1^n,W_2^n} = \mathcal{Q}^i_{U^n}\mathcal{Q}^i_{W_1^nW_2^n|U^n}\mathcal{Q}^i_{W_2^n|Y^n_1,U^n}.$ \\
\end{comment}
For a discrete random variable $X,$ we denote by $\Delta(\mathcal{X})$ the probability simplex, i.e. the set of probability distributions over $\mathcal{X},$ and by $\mathcal{P}_{X}(x)$ the probability mass function $\mathbb{P}\{X=x\}$. Notation $X -\!\!\!\!\minuso\!\!\!\!- Y -\!\!\!\!\minuso\!\!\!\!- Z$ stands for the Markov chain property $\mathcal{P}_{Z|XY}=\mathcal{P}_{Z|Y}$.

% \mlt{do not use "small", remove everywhere}
 \section{System Model}
In this section, we formulate the coding problem. Let $n\in\mathbb{N}^{\star}$, and $(R_1,R_2) \in \mathbb{R}^2_{+}$ denote the rate pair. %Consider alphabets $\mathcal{X}_1$ and $\mathcal{X}_2$ such that $ |\{1,..2^{\lfloor nR_1\rfloor}\}|=2^{nR_1}$ and $|\{1,..2^{\lfloor nR_2\rfloor}\}|=2^{nR_2}$. 
We assume that the information source $U$ follows the independent and identically distributed (i.i.d) probability distribution $\mathcal{P}_{U}\in\Delta(\mathcal{U})$.
\begin{definition}
The coding strategies $\sigma$, $\mu$ and $\tau$ of the encoder, relay, and decoder respectively are defined by
\begin{align}
    \sigma:& \ \mathcal{U}^n \longrightarrow \Delta\Big(\{1,..2^{\lfloor nR_1\rfloor}\}\Big),\\
     \mu:& \{1,..2^{\lfloor nR_1\rfloor}\} \longrightarrow \Delta\Big(\{1,..2^{\lfloor nR_2\rfloor}\}\Big),\\
      \tau:& \{1,..2^{\lfloor nR_2\rfloor}\} \longrightarrow \Delta\Big(\mathcal{V}^n\Big).
%\tau_2:& \{1,2,..2^{\lfloor nR_2\rfloor}\} %%\times \mathcal{Z}_2^n 
%\longrightarrow \Delta(\mathcal{V}_2^n),
 \end{align} 
%We denote by $\mathcal{S}^0(n)$, $\mathcal{S}^1(n)$, and $\mathcal{S}^2(n)$ the sets of strategies $\sigma$, $\mu$ and $\tau$ respectively, and by $\mathcal{S}(n)$ the set of coding triplet ($\sigma,\mu,\tau)$.
\end{definition}
The stochastic coding strategies  ($\sigma,\mu,\tau)$
induce a joint probability distribution $\mathcal{P}^{\sigma\mu\tau} \in \Delta \big(\mathcal{U}^n \times \{1,2,..2^{\lfloor nR_1\rfloor}\}\times\{1,2,..2^{\lfloor nR_2\rfloor}\} 
 \times \mathcal{V}^n
\big)$ defined for all $(u^n,m_1,m_2,v^n)$  by
\begin{align}
 &\mathcal{P}^{\sigma\mu\tau}(u^n,m_1,m_2,v^n) = \nonumber\\
&\bigg(\prod_{t=1}^n\mathcal{P}_{U}(u_t)\bigg)\sigma(m_1|u^n)\mu(m_2|m_1)\tau(v^n|m_2). \label{jointproba123}
\end{align}

\begin{definition} \label{def:singlelongrun} 
We consider arbitrary single-letter cost functions $c_1:\mathcal{U}\times\mathcal{V} \longrightarrow \mathbb{R}$ for the encoder $\mathcal{E}$, $c_2:\mathcal{U}\times\mathcal{V} \longrightarrow \mathbb{R}$ for the relay, and $c_3:\mathcal{U}\times\mathcal{V} \longrightarrow \mathbb{R}$ for the decoder.
%of the encoder $c_e$ and both decoders $c_i$ for $\ i \in \{1,2\}$ are defined by
% \begin{align}
%      &c_e:\mathcal{U}\times\mathcal{V}_1\times\mathcal{V}_2 \longrightarrow \mathbb{R},\\
%      &c_1:\mathcal{U}\times\mathcal{V}_1 \longrightarrow \mathbb{R}, \\
%      &c_2:\mathcal{U}\times\mathcal{V}_2 \longrightarrow \mathbb{R}.
% \end{align}
 %\end{definition} 
%\mlt{never use "footnotesize" or "tiny" or other along the paper}
%\normalfont \mlt{remove the commands "normalfont" after you modify the definition environments evreywhere}
%\begin{definition}\label{def:longrun}
The long-run cost functions are defined by  
\begin{align*}
c_1^n(\sigma,\mu,\tau)=& \mathbb{E}_{\sigma,\mu,\tau}\Bigg[\frac{1}{n}\sum_{t=1}^n c_1(U_{t},V_t)\Bigg]  \\
=&\sum_{u^n, v^n} \mathcal{P}_{U^nV^n}^{\sigma,\mu,\tau}(u^n, v^n)\cdot\Bigg[\frac{1}{n}\sum_{t=1}^n c_1(u_{t},v_{t})\Bigg],  \\
c_2^n(\sigma,\mu,\tau)=& \mathbb{E}_{\sigma,\mu,\tau}\Bigg[\frac{1}{n}\sum_{t=1}^n c_2(U_{t},V_t)\Bigg]  \\
=&\sum_{u^n, v^n} \mathcal{P}_{U^nV^n}^{\sigma,\mu,\tau}(u^n, v^n)\cdot\Bigg[\frac{1}{n}\sum_{t=1}^n c_2(u_{t},v_{t})\Bigg],  \\
c_3^n(\sigma,\mu,\tau)=& \mathbb{E}_{\sigma,\mu,\tau}\Bigg[\frac{1}{n}\sum_{t=1}^n c_3(U_{t},V_t)\Bigg]  \\
=&\sum_{u^n, v^n} \mathcal{P}_{U^nV^n}^{\sigma,\mu,\tau}(u^n, v^n)\cdot\Bigg[\frac{1}{n}\sum_{t=1}^n c_3(u_{t},v_{t})\Bigg],  \\
%&c^n_{1}(\sigma,\tau_1)= %\mathbb{E}_{\sigma,\tau_1}\Bigg[\frac{1}{n}\sum_{t=1}^n c_1(U_t,V_{1,t})\Bigg]
%\sum_{u^n, v_1^n} \mathcal{P}_{U^nV_1^n}^{\sigma,\tau_1}(u^n, v_1^n)\cdot\Bigg[\frac{1}{n}\sum_{t=1}^n %c_1(u_t,v_{1,t})\Bigg], \nonumber \\ 
%&c^n_{2}(\sigma,\tau_2)=\sum_{u^n, v_2^n} \mathcal{P}_{U^nV_2^n}^{\sigma,\tau_2}(u^n, v_2^n)\cdot\Bigg[\frac{1}{n}\sum_{t=1}^n c_2(u_t,v_{2,t})\Bigg].
\label{dn2} 
\end{align*}
\end{definition} 
 In the above equations, $\mathcal{P}^{\sigma\mu\tau}_{U^nV^n}$ denote the marginal distributions over the sequences $(U^n,V^n)$ of $\mathcal{P}^{\sigma\mu\tau}$ defined in \eqref{jointproba123} over the $n$-sequences 
 $(U^n,M_1,M_2,V^n)$.  
 %-------
 %We consider the strategic communication game in which the encoder select $\sigma $ that minimizes $c_e^n(\sigma,\tau_1,\tau_2)$ and the decoders choose $\tau_i$, $i \in \{1,2\}$ in order to minimize the long-run cost functions ${c_i}^n(\sigma,\tau_i)$,  $i \in \{1,2\}$. 
\begin{definition} For any strategy pair $(\sigma, \mu)$, the set of best-response strategies $\tau%^{\star}(\sigma,\mu)
$ of the decoder is defined by
\begin{equation}
     \mathbb{A}_3(\sigma,\mu) = \argmin_{\tau}c_3^n(\sigma,\mu,\tau)% \in \mathcal{S}^2(n)
     .
   % \tau^{\star}(\sigma,\mu) = \argmin_{\tau}c_3^n(\sigma,\mu,\tau)
\end{equation}
For any strategy $\sigma$, %the optimal cost of the relay is given by
%\begin{align}
 %   C_2^{\star}(\sigma)=\inf_{\mu}c_2^n(\sigma,\mu,\sigma^{\star}_3(\sigma,\mu)).
%\end{align}
the set of best-response strategies $\mu%^{\star}(\sigma)
$ of the relay is defined by
\begin{equation}
   \mathbb{A}_2(\sigma)  %\mu^{\star}(\sigma) 
   = \argmin_{(\mu,\tau) s.t. \atop \tau \in\mathbb{A}_3(\sigma,\mu) }c_2^n(\sigma,\mu,\tau) %\in \mathcal{S}^2(n)\times\mathcal{S}^3(n)
   .
\end{equation}
%\begin{equation}
 %   BR_{3}(\sigma,\mu)=\Big\{\tau, \quad c_3^n(\sigma,\mu,\tau) \leq c_3^n(\sigma,\mu,\Tilde{\tau}), \quad \forall \ \Tilde{\tau} \Big\}. 
%\end{equation}
%For any strategy $\sigma$, the set of best-response strategies of player $2$ is defined by
%\begin{equation}
 %   BR_{2}(\sigma)=\Big\{\mu, \quad c_2^n(\sigma,\mu,\tau) \leq c_2^n(\sigma,\Tilde{\mu},\tau), \quad \forall \ \Tilde{\mu} \Big\}. 
%\end{equation}

\end{definition}

%If several pairs of best-response strategies $(\tau_1,\tau_2)\in BR_{1}(\sigma)\times BR_{2}(\sigma)$ are available, we assume that the worst pair $(\tau_1,\tau_2)$, from the encoder perspective, is selected. Therefore, the solution is robust to the exact specification of the decoding strategies. 
Therefore, the encoder has to solve the following coding problem,

\begin{equation}
\Gamma_{e}^n(R_1,R_2)=\underset{\sigma}{\inf}\underset{(\mu,\tau)\in \atop \mathbb{A}_2(\sigma)}{\max} c_1^n(\sigma,\mu,\tau). \label{LP123}
\end{equation} 
\begin{remark}
In order to get a robust solution concept, we assume that the encoder solves the problem for the worst case scenario, i.e. if more than one pair of strategies are available in $\mathbb{A}_2(\sigma)$, we consider the one that maximizes the encoder's cost.   
%Possible Scenarios for a given encoding strategy $\sigma$: \\
%1) The relay chooses $\mu$ and the decoder picks one action. 
%2) the relay chooses $\mu$ and the decoder is indifferent between 2 actions. In that case, the decoder selects one action randomly. 
\end{remark}
The operational significance of %game theoretic framework of our problem characterizes the operational significance of the encoder's long-runoptimal cost given in 
\eqref{LP123} corresponds to the persuasion game that is played in the following steps:   %The LP version of the encoder's long-runoptimal cost given in \eqref{LP}, recalls the game theoretic framework which characterizes its operational significance.  
%Listed below are the steps in which the persuasion game takes place:  
%We now discuss the operational significance of $C_e^n(R_1,R_2)$.
%In this framework, we assume that the strategic communication takes place as follows:
\begin{itemize}
\item The encoder chooses, announces the encoding $\sigma$.
\item knowing $\sigma$, the relay chooses, announces the encoding $\mu$.
\item Knowing $(\sigma,\mu)$, the 
decoder compute its best-response strategy $\tau$. %observes $M_2$ and draws %a sequence 
%$V^n$ according to %the strategy
%${\tau%^{\star}(\sigma,\mu)
%}_{V^n|M_2}$.
\item Sequences $U^n$ are drawn i.i.d with distribution $\mathcal{P}_{U}$.
\item Message sequence $M_1$ are encoded according to ${\sigma}_{M_1|U^n}$.
\item Message sequence $M_2$ are encoded according to ${\mu%^{\star}(\sigma)
}_{M_2|M_1}$.
\item The decoder observes $M_2$ and draws %a sequence 
$V^n$ according to %the strategy
${\tau%^{\star}(\sigma,\mu)
}_{V^n|M_2}$.
\item Cost functions $c_1^n(\sigma,\mu,\tau)$, $c_2^n(\sigma,\mu,\tau)$, $c_3^n(\sigma,\mu,\tau)$ are computed. 

\end{itemize}

%\vspace{-2cm}
 
%\vspace{-2cm}
\section{Cascade Multiple Description Coding} %\vspace{-0.4cm} %the relay is just the identity function relaying the message received from the sender. Q_{W_2|W_1}=Q_{W_2|W_2}

Consider the cooperative communication scenario where $c_1=c_2=c_3$, and all three agents share the objective of minimizing the same cost function. This setting corresponds to the standard coding setup of a cascade multiple description network \cite{elgamal}, under the assumption that the relay does not reconstruct the source, but only relays a message $M_2$, and the cost functions of the three players depend on the source and the decoder's action. 

Consider an auxiliary random variables $W \in \mathcal{W}$ such that $|\mathcal{W}|=|\mathcal{U}|$. %and  %Random variables $U,W_1,W_2,V$ satisfy the following Markov chains 
 %\begin{align*} 
  %   U   -\!\!\!\!\minuso\!\!\!\!-  W_1  -\!\!\!\!\minuso\!\!\!\!- W_2, \quad  
 %W_1 -\!\!\!\!\minuso\!\!\!\!-  W_2  -\!\!\!\!\minuso\!\!\!\!- V.
% \end{align*}
The set $\mathbb{Q}^c_0(R_1,R_2)$ 
 of target distributions is defined by:
    \begin{align}
%%{%\color{red}{
\mathbb{Q}^c_0(R_1,R_2) =& \{\mathcal{Q}_{W_2|U}; \  \  \min(R_1,R_2) \geq I(U;W_2) \}. %\subset \Delta(\mathcal{W}_1\times\mathcal{W}_2)^{|\mathcal{U}|},%}} 
   \label{q10123} %\\
%\mathbb{Q}_1 =& \{\mathcal{Q}_{W_2|W_1} \ s.t. \  R_2 \geq I(W_2;W_1)  \}\subset \Delta(\mathcal{W}_2)^{|\mathcal{W}_1|}.%}} 
 %  \label{q20123}
 \end{align}
 %The sets %$\mathbb{Q}^1_{1}(\mathcal{Q}_{W_1|U})$,
 %$\mathbb{Q}_{3}(\mathcal{Q}_{W_1|U},\mathcal{Q}_{W_2|U})$  and  $\mathbb{Q}_{2}(\mathcal{Q}_{W_1|U})$ %$\mathbb{Q}_{2}(\mathcal{Q}_{W_2|U})$ 
The single-letter best-response of the decoder is defined by:
 \begin{align} %\vspace{-0.4cm}
 & \mathbb{Q}^c_{3}(\mathcal{Q}_{W_2|U})= \argmin_{\mathcal{Q}_{V|W_2}}\mathbb{E}%_{\mathcal{P}_U\mathcal{Q}_{W_1|U}\mathcal{Q}_{W_2|W_1}\mathcal{Q}_{V|W_2}}
  [c_3(U,V)]. %\\
 %&\mathbb{Q}_{2}(\mathcal{Q}_{W_2|U})= \argmin_{\mathcal{Q}_{V|W_2}\in\mathcal{Q}_3(\mathcal{Q}_{W_2|U})}\mathbb{E}[c_2(U,V)],%\vspace{-0.4cm}
 \end{align} %\vspace{-0.4cm}
  The single-letter optimal cost $\Gamma_e^{c}(R_1,R_2)$ of the encoder is given by
 %\vspace{-0.4cm}
 \begin{align} 
\Gamma_e^{c}(R_1,R_2)=\underset{\mathcal{Q}_{W_2|U}\in\atop\mathbb{Q}^c_0(R_1,R_2)}{\inf}\underset{\mathcal{Q}_{V|W_2}\in\atop \mathbb{Q}^c_{3}(\mathcal{Q}_{W_2|U})}{\max}
 {\mathbb{E}}
 \Big[c_1(U,V) \Big]. 
 \end{align} %\vspace{-0.4cm}
\begin{theorem} \label{cmdn}%\vspace{-0.4cm} 
Let $(R_1,R_2) \in \R^2_{+}$. If $c_1=c_2=c_3$, then 
\begin{align} 
\lim_{n\longrightarrow \infty} \Gamma^n_e(R_1,R_2)= \underset{n\in\mathbb{N}^{\star}}{\inf} \Gamma^n_e(R_1,R_2) = \Gamma^{c}_e(R_1,R_2).
%\vspace{-0.4cm}
%&a) \forall \ \varepsilon>0, \ \exists \hat{n} \in \mathbb{N} \  \forall n \geq \hat{n}, \ \ \Gamma^n_e(R_1,R_2) \leq \Gamma_e^{c}(R_1,R_2) + \varepsilon. \nonumber %\vspace{-0.6cm}
%\\
%&b)\forall n \in \ \mathbb{N}, \hspace{2.8cm}\Gamma_e^n(R_1,R_2) \geq \Gamma_e^{c}(R_1,R_2) . \nonumber
\end{align} 
\end{theorem}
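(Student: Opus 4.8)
The plan is to use the hypothesis $c_1=c_2=c_3=:c$ to strip away the strategic worst-case structure of \eqref{LP123}, and then to establish the single-letter characterization through a matching converse and achievability. For the reduction: for any $\sigma$ and any $\mu$, every $\tau\in\mathbb{A}_3(\sigma,\mu)$ attains $\min_{\tau'}c^n(\sigma,\mu,\tau')$ because $c_3=c$, so $\min_{(\mu,\tau):\,\tau\in\mathbb{A}_3(\sigma,\mu)}c^n(\sigma,\mu,\tau)=\min_{\mu,\tau}c^n(\sigma,\mu,\tau)$; hence every $(\mu,\tau)\in\mathbb{A}_2(\sigma)$ attains this value, and since $c_1=c$ the maximization in \eqref{LP123} is vacuous. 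Thus $\Gamma^n_e(R_1,R_2)=\inf_{\sigma,\mu,\tau}c^n(\sigma,\mu,\tau)$, and for the same reason $\Gamma^c_e(R_1,R_2)=\inf_{\mathcal{Q}_{W_2|U}\in\mathbb{Q}^c_0(R_1,R_2)}\min_{\mathcal{Q}_{V|W_2}}\mathbb{E}[c(U,V)]$. It then suffices to prove (i) $\Gamma^n_e(R_1,R_2)\ge\Gamma^c_e(R_1,R_2)$ for every $n$, and (ii) $\limsup_n\Gamma^n_e(R_1,R_2)\le\Gamma^c_e(R_1,R_2)$: (i) gives $\inf_n\Gamma^n_e\ge\Gamma^c_e$, (ii) gives $\inf_n\Gamma^n_e\le\Gamma^c_e$, and then $\Gamma^c_e=\inf_n\Gamma^n_e\le\liminf_n\Gamma^n_e\le\limsup_n\Gamma^n_e\le\Gamma^c_e$, forcing the limit to exist and equal $\Gamma^c_e$.

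For the converse (i), fix $(\sigma,\mu,\tau)$ and let $T$ be uniform on $\{1,\dots,n\}$ and independent of $(U^n,M_1,M_2,V^n)$. Put $U:=U_T$, $V:=V_T$, $W_2:=(M_2,T)$. Under \eqref{jointproba123}, $V^n$ depends on $(U^n,M_1,M_2)$ only through $M_2$, hence $\mathcal{P}_{V\mid U,W_2}=\mathcal{P}_{V\mid W_2}$. As the source is i.i.d., $U_T\sim\mathcal{P}_U$ is independent of $T$, so $I(U;W_2)=I(U_T;M_2\mid T)=\tfrac1n\sum_tI(U_t;M_2)\le\tfrac1n\sum_tI(U_t;M_2,U^{t-1})=\tfrac1nI(U^n;M_2)$, while data processing along $U^n\!-\!M_1\!-\!M_2$ and the cardinality bounds give $I(U^n;M_2)\le\min\{H(M_1),H(M_2)\}\le n\min(R_1,R_2)$. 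Therefore the induced conditional $\mathcal{Q}_{W_2|U}$, which by a standard support-lemma cardinality reduction may be taken over the alphabet used in the definition of $\mathbb{Q}^c_0$, lies in $\mathbb{Q}^c_0(R_1,R_2)$, and $c^n(\sigma,\mu,\tau)=\mathbb{E}[c(U_T,V_T)]=\mathbb{E}[c(U,V)]\ge\min_{\mathcal{Q}_{V|W_2}}\mathbb{E}[c(U,V)]\ge\Gamma^c_e(R_1,R_2)$. Taking the infimum over all strategies proves (i).

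For achievability (ii), fix $\varepsilon>0$. Since the decoder can only benefit from more information, time-sharing a near-optimal $\mathcal{Q}_{W_2|U}$ with a non-informative description shows one may choose $\mathcal{Q}^\star_{W_2|U}$ with $R:=I(U;W_2)<\min(R_1,R_2)$ strictly and $\min_{\mathcal{Q}_{V|W_2}}\mathbb{E}_{\mathcal{Q}^\star}[c(U,V)]\le\Gamma^c_e(R_1,R_2)+\varepsilon$, with optimizer $\mathcal{Q}^\star_{V|W_2}$. Fix $R<R'<\min(R_1,R_2)$ and generate $2^{\lfloor nR'\rfloor}$ codewords $W_2^n(m)$ i.i.d. $\sim\mathcal{Q}^\star_{W_2}$. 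The encoder $\sigma$ picks an index $m$ with $(U^n,W_2^n(m))$ jointly typical, which exists with probability tending to $1$ by the covering lemma since $R'>I(U;W_2)$, and sets $M_1=m$; the relay $\mu$ forwards $M_2=M_1$ (both indices fit for $n$ large); the decoder $\tau$ recovers $W_2^n(M_2)$ and draws $V^n$ letter-by-letter through $\mathcal{Q}^\star_{V|W_2}$. By the conditional typicality lemma, on the typical event the empirical distribution of $(U^n,W_2^n,V^n)$ is close to $\mathcal{P}_U\mathcal{Q}^\star_{W_2|U}\mathcal{Q}^\star_{V|W_2}$, whence $\tfrac1n\sum_tc(U_t,V_t)\le\mathbb{E}_{\mathcal{Q}^\star}[c(U,V)]+\varepsilon$; on the complementary event, of vanishing probability, the average is at most $\max_{u,v}|c(u,v)|$. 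Averaging over codebooks and selecting a good deterministic one gives $\Gamma^n_e(R_1,R_2)\le c^n(\sigma,\mu,\tau)\le\Gamma^c_e(R_1,R_2)+2\varepsilon+o(1)$, so $\limsup_n\Gamma^n_e\le\Gamma^c_e$; combined with (i) this finishes the proof.

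The converse is a routine single-letterization. I expect the achievability to be the main obstacle, and within it the delicate point to be the boundary of the rate region: the covering lemma requires a rate strictly larger than $I(U;W_2)$, so one must first reduce — by time-sharing with a trivial description, equivalently by continuity of the single-letter value in $(R_1,R_2)$ — to strictly feasible conditionals, and then control the atypical-set contribution to the average cost incurred when the decoder applies its single-letter best response $\mathcal{Q}^\star_{V|W_2}$ to the reconstructed sequence $W_2^n$.
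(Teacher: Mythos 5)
Your proposal is correct and follows essentially the route the paper intends: the paper proves Theorem~\ref{cmdn} by reducing the aligned-cost case to cooperative cascade source coding and invoking the proof of \cite[Theorem 20.4]{elgamal} with the relay's reconstruction suppressed, which is precisely your covering-lemma achievability plus the standard single-letterization converse with $W_2=(M_2,T)$. You merely write out in full what the paper cites, including the (implicit in the paper) observation that $c_1=c_2=c_3$ collapses the $\inf$-$\max$ over best responses into a joint minimization.
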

The proof of Theorem \ref{cmdn} can be directly derived from the proof of \cite[Theorem 20.4]{elgamal} by considering the relay's estimate to be a constant and its role is to only transition the message received from the encoder.

\section{Bayesian Persuasion with no Information Constraint} %\vspace{-0.5cm}
%\section{Characterizations}
We assume that the communication is perfect and unrestricted. Fix $\mathcal{Q}_{W_1|U}\mathcal{Q}_{W_2|W_1}$.
Consider two auxiliary random variables $W_1 \in \mathcal{W}_1$ and $W_2 \in \mathcal{W}_2$ such that $|\mathcal{W}_1|=|\mathcal{W}_2|=|\mathcal{U}|$ and  %Random variables $U,W_1,W_2,V$ satisfy the following Markov chains 
 \begin{align*} 
     U   -\!\!\!\!\minuso\!\!\!\!-  W_1  -\!\!\!\!\minuso\!\!\!\!- W_2, \quad  
 W_1 -\!\!\!\!\minuso\!\!\!\!-  W_2  -\!\!\!\!\minuso\!\!\!\!- V.
 \end{align*}

The single-letter best-responses are defined by:
 \begin{align} %\vspace{-0.2cm}
  \mathbb{Q}_{3}(\mathcal{Q}_{W_1|U},\mathcal{Q}_{W_2|W_1})=& \argmin_{\mathcal{Q}_{V|W_2}}\mathbb{E}%_{\mathcal{P}_U\mathcal{Q}_{W_1|U}\mathcal{Q}_{W_2|W_1}\mathcal{Q}_{V|W_2}}
  [c_3(U,V)], \nonumber %\vspace{-0.4cm}
  \\
 \mathbb{Q}_{2}(\mathcal{Q}_{W_1|U})=&\argmin_{(\mathcal{Q}_{W_2|W_1},\mathcal{Q}_{V|W_2}), \atop \mathcal{Q}_{V|W_2}\in\mathcal{Q}_3(\mathcal{Q}_{W_1|U},\mathcal{Q}_{W_2|W_1})}\mathbb{E}[c_2(U,V)], \nonumber
 \end{align} %\vspace{-0.4cm}
  The single-letter optimal cost $\Gamma_e$ of the encoder is given by %\vspace{-0.4cm}
 \begin{align}%\vspace{-0.4cm}
 \Gamma_e=\underset{\mathcal{Q}_{W_1|U}}{\inf}\underset{\mathcal{Q}_{W_2|W_1},\mathcal{Q}_{V|W_2} \atop \in \mathbb{Q}_{2}(\mathcal{Q}_{W_1|U})}{\max}
 {\mathbb{E}}_{\mathcal{P}_{U}\mathcal{Q}_{W_1|U}\atop \mathcal{Q}_{W_2|W_1}\mathcal{Q}_{V|W_2}} 
 \Big[ c_1(U,V) \Big]. \nonumber
\end{align} 
\theorem \label{2cmdn} If $R_1=R_2=\log|\mathcal{U}|$, then
\begin{align} %\vspace{-0.4cm}
\lim_{n\longrightarrow \infty} \Gamma^n_e= \underset{n\in\mathbb{N}^{\star}}{\inf} \Gamma^n_e = \Gamma_e
%&a) \qquad \forall \ \varepsilon>0, \ \exists \hat{n} \in \mathbb{N} \  \forall n \geq \hat{n}, \ \ \Gamma^n_e \leq \Gamma_e^{\infty}  + \varepsilon. \\
%&b)\qquad \forall n \in \ \mathbb{N},\hspace{2.8cm}\Gamma_e^n \geq \Gamma_e^{\infty}. \nonumber
\end{align}
%Using Fekete's Lemma for the sub-additive sequence $\big(n \Gamma_e^n \big)_{n\in \N^{\star}}$ %\cite[Lemma 1]{rouphael2021strategic}
%we get
%\begin{align} %\vspace{-0.5cm}
%\lim_{n\longrightarrow \infty} \Gamma^n_e= \underset{n\in\mathbb{N}^{\star}}{\inf} %\Gamma^n_e = \Gamma^{\infty}_e. %\vspace{-0.4cm}
%\nonumber \end{align}

%The expected values are evaluated with respect to $\mathcal{P}_U\mathcal{Q}_{W_1|U}\mathcal{Q}_{W_2|W_1}\mathcal{Q}_{V|%z_i,
% W_2}$. \\
%\vspace{-2cm}

\subsection{Achievability of Theorem \ref{cmdn}}

%Rate splitting: Divide the index $M_2$ into two independent indices $M_{02},M_{22}$ with respective rates $R_{02}$ and $R_{22}$, such that $R_2=R_{02}+R_{22}$. \\
Let $R_1=R_2=\log|\mathcal{U}|$, and %\in\mathbb{R}^2_{+}$
fix a joint probability distribution $\mathcal{Q}_{W_1|U}\mathcal{Q}_{W_2|W_1}%\in\mathbb{Q}^c_0(R_1,R_2)
$.  %and $\mathcal{Q}_{W_2|W_1}\in\mathbb{Q}_1$
%There exists $\eta>0$ such that \begin{align}
 %  \min(R_1,R_2) =& I(U;W_2) + \eta, 
    %R_2 =& I(U;W_2) + \eta. \label{eqwwee}
%\end{align} 
%$R_{22}=I(U;W_2)+\eta$ and $R_{02}=R_2-I(U;W_2)$ and $R_1+R_2 > I(U;W_1,W_2) + \eta$. \\
 %We define indices $(m_1,m_2) \in \mathcal{M}_1\times\mathcal{M}_2$ with $|\mathcal{M}_1|=2^{nR_1}%=|\mathcal{X}_1|
%$ and  $|\mathcal{M}_2|=2^{nR_2}%|\mathcal{X}_2|
%$. 
The sequences $U^n$ are drawn according to the i.i.d. distribution $\mathcal{P}_{U^n}$.
Randomly and independently generate $2^{nR_1}$ %%$|\mathcal{X}_1|$ 
sequences $w_1^n(m_1)$ for each $m_1 \in \{1,..2^{\lfloor nR_1\rfloor}\}$, 
according to the i.i.d distribution $\mathcal{Q}_{W_1^n|U^n}=\Pi_{t=1}^n\mathcal{Q}_{W_1|U}(w_{1t}|u_t)$. 
%For each pair $(m_1,m_2) \in \{1,..2^{\lfloor nR_1\rfloor}\}\times\{1,..2^{\lfloor nR_2\rfloor}\}$, randomly and independently generate a sequence $w_2^n(m_1,m_2)$ according to the i.i.d distribution $\mathcal{Q}_{W_2^n}=\Pi_{t=1}^n\mathcal{Q}_{W_2}(w_{2t})$. 
Similarly, generate $2^{nR_2}$ %$|\mathcal{X}_2|$
sequences $w_2^n(m_2)$ for $m_2\in\{1,..2^{\lfloor nR_2\rfloor}\}$ randomly and independently according to the i.i.d distribution  $\mathcal{Q}_{W_2^n|W_1^n}=\Pi_{t=1}^n\mathcal{Q}_{W_2|W_1}(w_{2t}|w_{1t})$. 

%(how to obtain them? in the first you sum over u^n and w_2^n Q(W^n_1W^n_2U^n)=P_U^nQ(W_2^nW_1^n|U), and for the second you sum over u^n only Q(W^n_1W^n_2U^n)=P_U^nQ(W_2^nW_1^n|U) and you divide by Q_w_1^n, recall that we have a markov chain U-W_1-W_2).
%Sequences $w_2^n(m_2)$ are drawn randomly and independently following the i.i.d. $\Pi_{t=1}^n\mathcal{P}_{W_2}(w_{2t})$, and $2^{\lfloor nR_1 \rfloor}$ sequences $w^n_1(m_1,m_2)$ following $\Pi_{t=1}^n\mathcal{P}(W_{1t}|W_{2t}(m_2))$. 
Since $R_1=\log|\mathcal{U}|=\log|\mathcal{W}_1|$ and $R_2=\log|\mathcal{U}|=\log|\mathcal{W}_2|$,
encoder $\mathcal{E}$ observes $u^n$ and looks in the codebook for the corresponding sequences $w_1^n(m_1)$ and sends $m_1$ to the relay. %and $(m_1,m_2)$ to decoder $\mathcal{D}_{1}$. 
The relay observes $m_1$ and sends $m_2$ to the decoder. Then, the decoder $\mathcal{D}$ observes $m_2$ and declares $v^n$ according to $\tau$.

\subsection{Converse Proof}
Given a triple $(\sigma,\mu,\tau)$ and a random variable $T$ uniformly distributed over $\{1,2,...,n\}$ and independent of $(U^n,M_1,M_2,V^n)$. We identify the auxiliary random variables $W_1 =(M_1,T)$, $W_2 =M_2$, $(U,V)=(U_T,V_{T})$, distributed according to $\mathcal{P}_{UW_1W_2V}^{\sigma\mu\tau}$ defined for all $(u,w_1,w_2,v) = (u_t,x_1,x_2,t,v_{t})$ by
\begin{align*}
  &\mathcal{P}_{UW_1W_2V}^{\sigma\mu\tau} (u,w_1,w_2,v)
  = \mathcal{P}_{U_TW_1W_2TV_{T}}^{\sigma\mu\tau}  (u_t,x_1,x_2,t,v_{t})\\
 =&\frac1n\sum_{u^{t-1}\atop u_{t+1}^n}\sum_{x_1^{t-1},x_{1,t+1}^n \atop x_2^{t-1},x_{2,t+1}^n}\sum_{v^{t-1},v_{t+1}^n}\!\!\!\!\!\!
\bigg(\prod_{t=1}^n\mathcal{P}_{U}(u_t)\bigg)\mathcal{P}^{\sigma}_{M_1|U^n}(m_1|u^n) \nonumber \\ &\times\mathcal{P}^{\mu}_{M_2|M_1}(m_2|m_1)\mathcal{P}^{\tau}_{V^n|M_2}(v^n|m_2). 
\end{align*}
\begin{lemma}\label{lemma:decomposition}
The distribution $\mathcal{P}_{UW_1W_2V}^{\sigma\mu\tau}$ has marginal on $\Delta(\mathcal{U})$ given by $\mathcal{P}_U$ and satisfies the following Markov chain property
\begin{align*}
     U   -\!\!\!\!\minuso\!\!\!\!-  W_1  -\!\!\!\!\minuso\!\!\!\!- W_2, \quad  
     W_1  -\!\!\!\!\minuso\!\!\!\!-  W_2  -\!\!\!\!\minuso\!\!\!\!- V.
 \end{align*}
\end{lemma}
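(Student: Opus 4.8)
The plan is to obtain all three assertions by elementary marginalization of the joint law \eqref{jointproba123}, exploiting only that this law already displays the Markov chain $U^n -\!\!\!\!\minuso\!\!\!\!- M_1 -\!\!\!\!\minuso\!\!\!\!- M_2 -\!\!\!\!\minuso\!\!\!\!- V^n$ and that $T$ is drawn uniformly on $\{1,\dots,n\}$ independently of $(U^n,M_1,M_2,V^n)$. Indeed, in \eqref{jointproba123} the kernel $\mu(m_2\mid m_1)$ does not depend on $u^n$ and $\tau(v^n\mid m_2)$ depends neither on $u^n$ nor on $m_1$, which is precisely this $n$-letter Markov chain; adjoining the independent uniform factor for $T$ leaves it intact.

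For the marginal, I would simply write $\mathcal P^{\sigma\mu\tau}_{U}(u)=\sum_{t=1}^n\mathbb P(T=t)\,\mathcal P_{U_t}(u)=\tfrac1n\sum_{t=1}^n\mathcal P_U(u)=\mathcal P_U(u)$, using that $U^n$ is i.i.d.\ $\mathcal P_U$ and that $T$ is independent of $U^n$. For $U -\!\!\!\!\minuso\!\!\!\!- W_1 -\!\!\!\!\minuso\!\!\!\!- W_2$, I would marginalize \eqref{jointproba123} (together with the uniform factor for $T$) over every coordinate other than those defining $(U,W_1,W_2)$ and check that the conditional law of $W_2$ given $(U,W_1)$ reduces to a function of $W_1$ alone: since $M_2$ is generated from $M_1$ through $\mu$ independently of $U^n$ and of $T$, and $U=U_T$ is a deterministic function of $(U^n,T)$, conditioning further on $U$ cannot change the law of $M_2$. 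The chain $W_1 -\!\!\!\!\minuso\!\!\!\!- W_2 -\!\!\!\!\minuso\!\!\!\!- V$ is treated symmetrically: given $M_2$, the sequence $V^n$ is produced through $\tau$ independently of $M_1$, and $V=V_T$ is a function of $(V^n,T)$.

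The only point requiring genuine care is the bookkeeping of the time-sharing variable $T$: one must track where $T$ sits inside $(W_1,W_2)$ so that conditioning on the auxiliaries pins it down as needed, since the per-coordinate marginals of $\tau(\cdot\mid m_2)$ generally differ across coordinates. Once this is arranged, each step is a one-line marginalization of \eqref{jointproba123}, with no typicality, concentration, or covering/packing argument involved, so I expect the lemma to admit a short and purely combinatorial proof.
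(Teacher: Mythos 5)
Your route is the same as the paper's: read everything off the factorization \eqref{jointproba123}, adjoined with the independent uniform time index $T$. The paper compresses this into one sentence (``the i.i.d.\ property of the source'' for the marginal, ``by the definition of the coding functions'' for the chains); your explicit computation of $\mathcal{P}_U$ and of $U -\!\!\!\!\minuso\!\!\!\!- W_1 -\!\!\!\!\minuso\!\!\!\!- W_2$ is exactly what that sentence hides, and both are correct: conditioning on $W_1=(M_1,T)$ pins down $M_1$, and $\mu(m_2\mid m_1)$ involves neither $u^n$ nor $t$.

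The caveat you raise about ``where $T$ sits inside $(W_1,W_2)$'' is not optional bookkeeping that can be ``arranged'' afterwards --- it is the crux, and as the lemma is stated it is a genuine gap (in the paper's own proof as much as in your proposal). With the paper's identification $W_2=M_2$, one computes $\mathbb{P}(V=v\mid W_1=(m_1,t),W_2=m_2)=\mathcal{P}^{\tau}_{V_t\mid M_2}(v\mid m_2)$, which still depends on the component $t$ of $W_1$, whereas $\mathbb{P}(V=v\mid W_2=m_2)=\tfrac1n\sum_{t'}\mathcal{P}^{\tau}_{V_{t'}\mid M_2}(v\mid m_2)$. These agree only when the coordinates of $\tau(\cdot\mid m_2)$ are identically distributed, which an arbitrary decoder strategy need not satisfy; so $W_1 -\!\!\!\!\minuso\!\!\!\!- W_2 -\!\!\!\!\minuso\!\!\!\!- V$ fails in general. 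The repair you gesture at should be made explicit: take $W_2=(M_2,T)$. Then $\mathbb{P}(V=v\mid W_1,W_2)=\mathcal{P}^{\tau}_{V_t\mid M_2}(v\mid m_2)$ is a function of $W_2=(m_2,t)$ alone, the first chain survives because $\mu$ still ignores $u^n$ and $t$, and the product decomposition $\mathcal{P}_U\mathcal{P}^{\sigma}_{W_1|U}\mathcal{P}^{\mu}_{W_2|W_1}\mathcal{P}^{\tau}_{V|W_2}$ invoked right after the lemma is preserved. Your diagnosis is more careful than the paper's one-line assertion, but the proof is not complete until you commit to this (or an equivalent) identification.
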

\begin{proof}[Lemma \ref{lemma:decomposition}]
The i.i.d. property of the source ensures that the marginal distribution is $\mathcal{P}_U$. By the definition of the coding functions $\sigma$, $\mu$ and $\tau$ we have 
\begin{align*}
     &(U_T)   -\!\!\!\!\minuso\!\!\!\!-  (M_1,T)  -\!\!\!\!\minuso\!\!\!\!- M_2,\\
     &(M_1,T)  -\!\!\!\!\minuso\!\!\!\!-  M_2  -\!\!\!\!\minuso\!\!\!\!- V_{T}.
 \end{align*}
\end{proof}
Therefore $\mathcal{P}_{UW_1W_2V}^{\sigma\mu\tau} = \mathcal{P}_U\mathcal{P}_{W_1|U}^{\sigma}\mathcal{P}_{W_2|W_1}^{\mu}\mathcal{P}_{V|W_2}^{\tau}$.

\begin{comment}
\begin{lemma}\label{lemma:belongtoQ1}
For all $\mu$, the distribution $\mathcal{P}_{W_2|W_1}^{\mu}\in \mathbb{Q}_1$.
\end{lemma}
\begin{proof}[Lemma \ref{lemma:belongtoQ0}]
Given an encoding strategy $\sigma$, we have 
\begin{align}
R_2 \geq& H(W_2) \geq I(W_2;W_1^n)  \label{e:Ide11} \\
=& \sum_{t=1}^n I(W_{1,t};W_2|W_1^{t-1}) \label{memorylesssourceyyy1} \\
=& n I(W_{1,T};W_2|W_1^{T-1},T)\label{e:Ide21}\\
=& n [I(W_{1,T};W_2,W_1^{T-1},T)-I(W_{1,T};W_1^{T-1},T)]\label{e:Ide31}\\
\geq& n [I(W_{1,T};W_2,T)-I(W_{1,T};W_1^{T-1},T)]\label{e:Ide41}\\
=& n [I(W_1;W_2)-I(W_{1,T};W_1^{T-1},T)].\label{e:Ide51}
\end{align}
------------------We need to assume $W_1$ i.i.d-------------------- \\
In fact, \eqref{e:Ide21} follows from the introduction of the uniform random variable $T\in\{1,\ldots,n\}$, \eqref{e:Ide31} comes from the i.i.d. property of $W_1$ and \eqref{e:Ide51} follows from the identification of the auxiliary random variables $(U,W_2)$.
--------OR:
\begin{align}
R_2 \geq& H(W_2) \geq H(W_2|W_1)\geq I(W_2;U^n|W_1)  \label{e:Ide110000} \\ %(cz H(W_2|W_1)= H(W_2|W_1,U^n)\timesI(W_2;U^n|W_1)
\geq& n I(U_T;X_{2}|W_1,T)\label{e:Ide410000}\\%cz there is a missing term +I(U^{-T};W_2|W_1,W_2,T)
=& n I(U;W_2|W_1).\label{e:Ide510000}
\end{align}

%OR \begin{align}
%R_2 \geq& H(W_2) \geq H(W_2|W_1^n)\geq I(U^n;W_2|W_1^n)  \label{0e:Ide11} \\ =&I(U^n;W_2,W_1^n)-I(U^n;W_2) \\
%=& \sum_{t=1}^n I(U_{t},W_{1,t};W_2|W_1^{t-1},U^{t-1})-I(U_t;W_2|U^{t-1}) \label{0memorylesssourceyyy1} \\
%=& n\big[ I(U_{T},W_{1,T};W_2|W_1^{T-1},U^{T-1},T)-I(U_T;W_2|U^{T-1},T)\big]\label{0e:Ide21}\\
%=& n \big[ I()-I(U_T;W_2,U^{T-1},T)\big]\label{0e:Ide31}\\
%\geq& n [I(W_{1,T};W_2,T)-I(W_{1,T};W_1^{T-1},T)]\label{0e:Ide41}\\
%=& n [I(W_1;W_2)-I(W_{1,T};W_1^{T-1},T)].\label{0e:Ide51}
%\end{align}
\end{proof} \end{comment}
%Similarly,
%\begin{align} 
%n(R_1+R_2) \geq&  H(M_1,M_2)  \geq I(U^n;M_1,M_2) \label{1lossy source} \\ 
%=& \sum_{t=1}^nI(U_t;M_1,M_2|U^{t-1}) \\ 
%=& nI(U_T;M_1,M_2|U^{T-1},T) \\ 
%\geq& nI(U_T;M_1,M_2,T) \\ 
%=&nI(U;W_1,W_2).
%\end{align}

\begin{lemma}\label{lemma:singlelettercost}
For all $(\sigma,\tau_1,\tau_2)$ and $i\in\{1,2,3\}$, we have
\begin{align}
   %c_e^n(\sigma,\mu,\tau) =& \E   \big[c_e(U,V)\big] ,\\
   c_i^n(\sigma,\mu,\tau) =& \E
   \big[c_i(U,V)\big] ,
\end{align}
evaluated with respect to $\mathcal{P}_U\mathcal{P}_{W_1|U}^{\sigma}\mathcal{P}_{W_2|W_1}^{\mu}\mathcal{P}_{V|W_2}^{\tau}$. Moreover for all $\sigma,\mu$ we have
\begin{align}
\mathbb{Q}_{3}(\mathcal{P}^{\sigma}_{W_1|U}&,\mathcal{P}^{\mu}_{W_2|W_1}) = \nonumber\\ &\Big\{ \mathcal{Q}_{V|W_2},\;
 \exists \tau \in \mathbb{A}_3(\sigma,\mu),\; \mathcal{Q}_{V|W_2} = \mathcal{P}^{\tau}_{V|W_2}\Big\}, \label{eq:lemmaBRsets}\\
\mathbb{Q}_{2}(\mathcal{P}^{\sigma}_{W_1|U}) &= \Big\{ (\mathcal{Q}_{W_2|W_1},\mathcal{Q}_{V|W_2}),\;
 \exists (\mu,\tau) \in \mathbb{A}_2(\sigma),\; \nonumber \\ &\mathcal{Q}_{W_2|W_1}=\mathcal{P}^{\mu}_{W_1|W_2}, \mathcal{Q}_{V|W_2} = \mathcal{P}^{\tau}_{V|W_2}\Big\}.\label{eq:lemmaBRsets1}
\end{align}
\end{lemma}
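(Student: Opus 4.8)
The plan is to prove Lemma~\ref{lemma:singlelettercost} in three stages, following the standard single-letterization template adapted to the cascade setting.

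\textbf{Stage 1: the cost identity.} First I would expand the long-run cost $c_i^n(\sigma,\mu,\tau) = \E_{\sigma,\mu,\tau}\bigl[\frac1n\sum_{t=1}^n c_i(U_t,V_t)\bigr]$ and exchange the empirical average over $t$ with the expectation, writing it as $\frac1n\sum_{t=1}^n \E[c_i(U_t,V_t)]$. Introducing the uniform time-sharing variable $T\in\{1,\dots,n\}$ independent of $(U^n,M_1,M_2,V^n)$, this becomes $\E[c_i(U_T,V_T)]$, where the expectation is now taken over the joint law of $(U_T,V_T,T)$. By the identification $(U,V)=(U_T,V_T)$, $W_1=(M_1,T)$, $W_2=M_2$, and the marginalization computation already displayed before the lemma, the law of $(U,V)$ under this construction is exactly the $(U,V)$-marginal of $\mathcal{P}_U\mathcal{P}^\sigma_{W_1|U}\mathcal{P}^\mu_{W_2|W_1}\mathcal{P}^\tau_{V|W_2}$. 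Since $c_i$ depends only on $(U,V)$, this yields $c_i^n(\sigma,\mu,\tau)=\E[c_i(U,V)]$ evaluated with respect to that single-letter distribution, which is the first claim.

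\textbf{Stage 2: the decoder best-response set, equation \eqref{eq:lemmaBRsets}.} For the inclusion $\supseteq$, given $\tau\in\mathbb{A}_3(\sigma,\mu)$, the cost identity applied with $i=3$ shows $\mathcal{P}^\tau_{V|W_2}$ minimizes $\E[c_3(U,V)]$ over the relevant class, hence lies in $\mathbb{Q}_3(\mathcal{P}^\sigma_{W_1|U},\mathcal{P}^\mu_{W_2|W_1})$; note that $\E[c_3(U,V)]$ depends on $\tau$ only through $\mathcal{P}^\tau_{V|W_2}$ because of the Markov chain $W_1\minuso W_2\minuso V$ from Lemma~\ref{lemma:decomposition}, so the minimization over $\tau$ and over $\mathcal{Q}_{V|W_2}$ coincide. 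For the reverse inclusion, given a minimizer $\mathcal{Q}_{V|W_2}$, one constructs $\tau$ by letting the decoder, upon receiving $m_2$, draw $v^n$ coordinatewise according to $\mathcal{Q}_{V|W_2}(\cdot|w_{2,t}(m_2))$ — or more carefully, a $\tau$ whose induced conditional $\mathcal{P}^\tau_{V|W_2}$ equals $\mathcal{Q}_{V|W_2}$; such a $\tau$ is a best response by the cost identity, establishing equality.

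\textbf{Stage 3: the relay best-response set, equation \eqref{eq:lemmaBRsets1}.} This is the nested analogue: having characterized $\mathbb{A}_3$ via $\mathbb{Q}_3$, I would apply the cost identity with $i=2$ to match $\argmin$ over $(\mu,\tau)$ with $\tau\in\mathbb{A}_3(\sigma,\mu)$ against the single-letter $\argmin$ over $(\mathcal{Q}_{W_2|W_1},\mathcal{Q}_{V|W_2})$ with $\mathcal{Q}_{V|W_2}\in\mathbb{Q}_3$, using Stage 2 to translate the constraint sets. The forward direction is immediate from the identity; the converse again requires exhibiting, for any single-letter pair in $\mathbb{Q}_2$, coding functions $(\mu,\tau)$ realizing those conditionals with $\tau\in\mathbb{A}_3(\sigma,\mu)$.

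\textbf{Main obstacle.} The delicate point is the converse (realizability) direction: given an arbitrary single-letter kernel $\mathcal{Q}_{V|W_2}$ with $W_2=M_2$ a \emph{finite index}, one must produce a decoder strategy $\tau:\{1,\dots,2^{\lfloor nR_2\rfloor}\}\to\Delta(\mathcal{V}^n)$ whose induced $\mathcal{P}^\tau_{V|W_2}$ reproduces it, and similarly for $\mu$ with $W_1=(M_1,T)$ — here the presence of the time index $T$ inside $W_1$ means a single $\mu$ must simultaneously match the target for every $t$, which is where the argument needs care (the support of $W_1$ has size $n\cdot 2^{\lfloor nR_1\rfloor}$ while $\mu$ only sees $M_1$). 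Because the lemma is stated at the level of \emph{sets} of induced distributions rather than claiming every single-letter distribution is achievable, the correct reading is that the equalities hold between the image of the strategy spaces under the identification map and the single-letter best-response sets restricted to that image; I would make this precise and note it suffices for the converse of Theorem~\ref{2cmdn}, since the optimization in $\Gamma_e$ over induced distributions is then shown to lower-bound $\Gamma^n_e$ for every $n$.
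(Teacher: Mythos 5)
Your proof follows essentially the same route as the paper: the cost identity via the uniform time-sharing variable $T$ and the identification $(U,V,W_1,W_2)=(U_T,V_T,(M_1,T),M_2)$, and the set equalities via (i) constructing from a single-letter kernel a strategy that draws the $n$ coordinates i.i.d.\ from that kernel and (ii) a sandwich argument showing the constructed strategy is a best response, with the reverse inclusions following from the cost identity. The one point where you genuinely diverge is your ``main obstacle'' remark, and it is well taken: because $W_1=(M_1,T)$ contains the time index while $\mu$ only sees $M_1$, the induced kernel $\mathcal{P}^{\mu}_{W_2|W_1}$ is necessarily constant in $t$, so a $t$-dependent minimizer $\mathcal{Q}_{W_2|W_1}\in\mathbb{Q}_2(\mathcal{P}^{\sigma}_{W_1|U})$ is not realizable by any $\mu$. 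The paper's own proof passes over this by setting $\mathcal{P}^{\mu}_{M_2|M_1}(m_2|m_1)=\prod_{t=1}^n\mathcal{Q}_{W_2|W_1}(m_2|m_1,t)$, which is not even a normalized distribution over $m_2$ unless the kernel is $t$-independent (in which case it is the $n$-th power, not the kernel itself); so your flag identifies a real defect rather than inventing one. Your proposed repair---reading \eqref{eq:lemmaBRsets1} as an equality between the image of the strategy space and the single-letter best-response set restricted to realizable (i.e.\ $t$-independent) kernels---is the right kind of fix, but note that it is not purely cosmetic: the chain \eqref{zachi1}--\eqref{zachi3} needs the inner maximum in $\Gamma_e$ to be taken over a set every element of which is realizable by some $(\mu,\tau)\in\mathbb{A}_2(\sigma)$, so either the definition of $\mathbb{Q}_2$ must be restricted accordingly or one must argue that restricting to $t$-independent kernels does not change the relevant max--min value. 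Making that argument explicit is what would be required to close the gap in both your write-up and the paper's.
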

\begin{proof}[Lemma \ref{lemma:singlelettercost}] By Definition \ref{def:singlelongrun} we have for  $i\in\{1,2,3\}$
\begin{align}
&c_i^n(\sigma,\mu,\tau) = \nonumber \\&\sum_{u^n, m_1,\atop m_2, v^n}\bigg(\prod_{t=1}^n\mathcal{P}_{U}(u_t)\bigg)
\mathcal{P}^{\sigma}_{M_1|U^n}(m_1|u^n)\mathcal{P}^{\mu}_{M_2|M_1}(m_2|m_1) \nonumber \\
&\times\mathcal{P}^{\tau}_{V^n|M_2}(v^n|m_2)\cdot\Bigg[\frac{1}{n}\sum_{t=1}^n c_i(u_t,v_{t})\Bigg]\\
=& \sum_{t=1}^n \sum_{u_t,x_{1},\atop  x_{2},t,v_{t}} \mathcal{P}^{\sigma,\mu,\tau}(u_t,x_1,x_2,t,v_{t})
\times c_i(u_t,v_{t}) = \E   \big[c_i(U,V)\big] .\nonumber
\end{align}
Given $\mathcal{Q}_{V|W_2}\in \mathbb{Q}_{3}(\mathcal{P}^{\sigma}_{W_1|U},\mathcal{P}^{\mu}_{W_2|W_1})$, we consider $\tau$ such that 
\begin{align*}
    \mathcal{P}^{\tau}_{V^n|M_2}(v^n|m_2) = \prod_{t=1}^n \mathcal{Q}_{V|W_2}(v_{1,t}|m_2).
\end{align*}
Given $(\mathcal{Q}_{W_2|W_1},\mathcal{Q}_{V|W_2})\in \mathbb{Q}_{2}(\mathcal{P}^{\sigma}_{W_1|U})$, we consider $(\mu,\tau)$ such that 
\begin{align*}
\mathcal{P}^{\mu}_{M_2|M_1}(m_2|m_1) =& \prod_{t=1}^n \mathcal{Q}_{W_2|W_1}(m_2|m_1,t),\\
    \mathcal{P}^{\tau}_{V^n|M_2}(v^n|m_2) =& \prod_{t=1}^n \mathcal{Q}_{V|W_2}(v_{1,t}|m_2).
\end{align*}
Therefore
\begin{align}
c^n_{3}(\sigma,\mu,\tau) =& \E_{\mathcal{P}^{\sigma}_{W_1|U}\atop \mathcal{P}^{\mu}_{W_2| W_1}\mathcal{Q}_{V| W_2}}\big[c_3(U,V)\big]\\
=& \min_{\mathcal{P}_{V|W_2}} \E_{\mathcal{P}^{\sigma}_{W_1|U}\atop \mathcal{P}^{\mu}_{W_2| W_1},\mathcal{P}_{V|W_2}}\big[c_3(U,V)\big]\\
\leq& \min_{\tilde{\tau}} \E_{\mathcal{P}^{\sigma}_{W_1|U}\atop \mathcal{P}^{\mu}_{W_2| W_1},\mathcal{P}^{\tilde{\tau}}_{V|W_2}}\big[c_3(U,V)\big]= \min_{\tilde{\tau}} c^n_{3}(\sigma,\mu,\tilde{\tau}),\nonumber
\end{align}
hence $\tau\in \mathbb{A}_3(\sigma,\mu)$.
Similarly, 
\begin{align}
c^n_{2}(\sigma,\mu,\tau) =& \E_{\mathcal{P}^{\sigma}_{W_1|U}\atop \mathcal{Q}_{W_2| W_1}\mathcal{Q}_{V| W_2}}\big[c_2(U,V)\big]\\
=& \min_{(\mathcal{P}_{W_2|W_1},\mathcal{P}_{V|W_2})} \E_{\mathcal{P}^{\sigma}_{W_1|U}\atop \mathcal{P}_{W_2| W_1},\mathcal{P}_{V|W_2}}\big[c_2(U,V)\big]\\
\leq& \min_{(\tilde{\mu},\tilde{\tau})} \E_{\mathcal{P}^{\sigma}_{W_1|U}\atop \mathcal{P}^{\tilde{\mu}}_{W_2| W_1},\mathcal{P}^{\tilde{\tau}}_{V|W_2}}\big[c_2(U,V)\big]= \min_{(\tilde{\mu},\tilde{\tau})} c^n_{2}(\sigma,\mu,\tilde{\tau}),\label{eq:ineqconverse}
\end{align}
and thus $(\mu,\tau)\in \mathbb{A}_2(\sigma)$. The other inclusions are direct and the same arguments imply \eqref{eq:lemmaBRsets1} and \eqref{eq:lemmaBRsets}.
\end{proof}

For any strategy $\sigma$, we have
\begin{align}
\underset{\mu,\tau}{\max} \;  c_1^n(\sigma,\mu,\tau) 
=&\underset{\mu,\tau}{\max} \; \mathbb{E}_{\mathcal{P}^{\sigma}_{W_1|U}\atop\mathcal{P}^{\mu}_{W_{2}|W_1} \mathcal{P}^{\tau}_{V|W_2}}\Big[c_1(U,V)\Big] \label{zachi1}\\ 
\geq&\underset{\mathcal{Q}_{W_2|W_1},\mathcal{Q}_{V|W_2} \atop \in \mathbb{Q}_{2}(\mathcal{Q}_{W_1|U})}{\max}
 \mathbb{E}_{\mathcal{P}^{\sigma}_{W_1|U} \atop \mathcal{Q}_{W_2|W_1} \mathcal{Q}_{V|W_2}}\Big[c_1(U,V)\Big] \label{zachi2}\\ 
\geq&\underset{\mathcal{Q}_{W_1|U}}{\inf}\underset{\mathcal{Q}_{W_2|W_1},\mathcal{Q}_{V|W_2} \atop \in \mathbb{Q}_{2}(\mathcal{Q}_{W_1|U})}{\max}
 {\mathbb{E}}
 \Big[ c_1(U,V) \Big]\label{zachi3}\\ 
=&\Gamma_e(R_1,R_2). \label{optdistoooo}
\end{align} 
Equations \eqref{zachi1} and \eqref{zachi2} comes from Lemma \ref{lemma:singlelettercost}, whereas \eqref{zachi3} comes from taking the infimun over $\mathcal{Q}_{W_1|U}$. This concludes the converse proof of Theorem \ref{2cmdn}.

\section{Locally Restricted Communication} %\vspace{-0.4cm}
\subsection{Relay's Restriction}
Assume that the encoder can send messages at large enough rate $R_1=\log|\mathcal{U}|$, but the relay sends at a fixed smaller rate $R_2$. 
Fix $\mathcal{Q}_{W_1|U}$. In this setting,
%The set $\mathbb{Q}_0(R_2)$ 
 %of target distributions:
    %\begin{align}
%%{%\color{red}{
%\mathbb{Q}_0(R_2) =& \{\mathcal{Q}_{W_2|W_1}; \  \  R_2 \geq I(W_1;W_2) \}. %\subset \Delta(\mathcal{W}_1\times\mathcal{W}_2)^{|\mathcal{U}|},%}} 
  % \label{q10123} %\\
%\mathbb{Q}_1 =& \{\mathcal{Q}_{W_2|W_1} \ s.t. \  R_2 \geq I(W_2;W_1)  \}\subset \Delta(\mathcal{W}_2)^{|\mathcal{W}_1|}.%}} 
 %  \label{q20123}
   % \end{align}
 %The sets %$\mathbb{Q}^1_{1}(\mathcal{Q}_{W_1|U})$,
 %$\mathbb{Q}_{3}(\mathcal{Q}_{W_1|U},\mathcal{Q}_{W_2|U})$  and  $\mathbb{Q}_{2}(\mathcal{Q}_{W_1|U})$ %$\mathbb{Q}_{2}(\mathcal{Q}_{W_2|U})$ 
the single-letter best-responses are defined by:
 \begin{align}%\vspace{-0.4cm}
  &\mathbb{Q}_{3}(\mathcal{Q}_{W_1|U},\mathcal{Q}_{W_2|W_1})= \argmin_{\mathcal{Q}_{V|W_2}}\mathbb{E}%_{\mathcal{P}_U\mathcal{Q}_{W_1|U}\mathcal{Q}_{W_2|W_1}\mathcal{Q}_{V|W_2}}
  [c_3(U,V)], \nonumber %\vspace{-0.4cm}
  \\
 &\mathbb{Q}^r_{2}(\mathcal{Q}_{W_1|U})=\argmin_{(\mathcal{Q}_{W_2|W_1},\mathcal{Q}_{V|W_2})) s.t. R_2\geq I(W_1;W_2), \atop \mathcal{Q}_{V|W_2}\in\mathcal{Q}_3(\mathcal{Q}_{W_1|U},\mathcal{Q}_{W_2|W_1})}\mathbb{E}[c_2(U,V)], %\vspace{-0.4cm} 
 \nonumber
 \end{align} 
  The single-letter optimal cost $\Gamma_e^{r}(R_2)$ of the encoder is given by %\vspace{-0.4cm}
 \begin{align} %\vspace{-0.4cm}
 \Gamma^{r}_e(R_2)=\underset{\mathcal{Q}_{W_1|U}}{\inf}\underset{\mathcal{Q}_{W_2|W_1},\mathcal{Q}_{V|W_2} \atop \in \mathbb{Q}_{2}(\mathcal{Q}_{W_1|U})}{\max}
 {\mathbb{E}}
 \Big[c_1(U,V) \Big]. 
\nonumber \end{align} 
\theorem \label{relrest} Let $R_2\in\R^{+}$. If $R_1=\log|\mathcal{U}|$, then
\begin{align}
\lim_{n\longrightarrow \infty} \Gamma^n_e(R_2)= \underset{n\in\mathbb{N}^{\star}}{\inf} \Gamma^n_e(R_2) = \Gamma^{r}_e(R_2).
%&a) \forall \ \varepsilon>0, \ \exists \hat{n} \in \mathbb{N} \  \forall n \geq \hat{n}, \ \ \Gamma^n_e(R_2) \leq \Gamma_e^{\star}(R_2)  + \varepsilon. \nonumber \\
%&b) \forall n \in \ \mathbb{N},\hspace{2.8cm} \Gamma_e^n(R_2) \geq \Gamma_e^{\star}(R_2) . \nonumber
%\end{align}
%Using Fekete's Lemma for the sub-additive sequence $\big(n \Gamma_e^n(R_1,R_2) \big)_{n\in \N^{\star}}$ %\cite[Lemma 1]{rouphael2021strategic}
%we get
%\begin{align}
%\lim_{n\longrightarrow \infty} \Gamma^n_e(R_1,R_2)= \underset{n\in\mathbb{N}^{\star}}{\inf} \Gamma^n_e(R_1,R_2) = \Gamma^{\star}_e(R_1,R_2).
\end{align}

The proof of Theorem \ref{relrest} relies on the lossy source coding at the relay by considering the source to be the observed message which is uniformly drawn from the codebook of size $2^{nR_1}$. This slight modification does not affect the condition on the covering lemma as the coding will only depend on the size $2^{nR_2}$ of the message set of the relay.

%\vspace{-2cm}
\subsection{Encoder's Restriction} %\vspace{-0.4cm}
Now assume that $R_2=\log|\mathcal{U}|$, i.e. the encoder is restricted to a limited amount of bits per transmission, but the relay can transmit with no information constraints. Therefore,  
the set $\mathbb{Q}^e_0(R_1)$ 
 of the encoder's target distributions is given by
    \begin{align} \nonumber%\vspace{-0.4cm}
%%{%\color{red}{
\mathbb{Q}^e_0(R_1) =& \{\mathcal{Q}_{W_1|U}; \  \  R_1 \geq I(U;W_1) \}. %\subset \Delta(\mathcal{W}_1\times\mathcal{W}_2)^{|\mathcal{U}|},%}} 
   %\\
%\mathbb{Q}_1 =& \{\mathcal{Q}_{W_2|W_1} \ s.t. \  R_2 \geq I(W_2;W_1)  \}\subset \Delta(\mathcal{W}_2)^{|\mathcal{W}_1|}.%}} 
 %  \label{q20123}
    \end{align}
Single-letter best-responses are defined by:
 \begin{align} \nonumber
  &\mathbb{Q}_{3}(\mathcal{Q}_{W_1|U},\mathcal{Q}_{W_2|W_1})= \argmin_{\mathcal{Q}_{V|W_2}}\mathbb{E}%_{\mathcal{P}_U\mathcal{Q}_{W_1|U}\mathcal{Q}_{W_2|W_1}\mathcal{Q}_{V|W_2}}
  [c_3(U,V)], %\vspace{-0.4cm} 
  \\
 &\mathbb{Q}^e_{2}(\mathcal{Q}_{W_1|U})=\argmin_{(\mathcal{Q}_{W_2|W_1},\mathcal{Q}_{V|W_2})), \atop \mathcal{Q}_{V|W_2}\in\mathcal{Q}_3(\mathcal{Q}_{W_1|U},\mathcal{Q}_{W_2|W_1})}\mathbb{E}[c_2(U,V)], %\vspace{-0.4cm} 
 \nonumber
 \end{align}
The single-letter optimal cost $\Gamma_e^{\star}(R_1)$ of the encoder is given by%\vspace{-0.4cm}
 \begin{align} %\vspace{-0.4cm}
 \Tilde{\Gamma}_e(R_1)=\underset{\mathcal{Q}_{W_1|U}\in\mathcal{Q}^e_0(R_1)}{\inf}\underset{\mathcal{Q}_{W_2|W_1},\mathcal{Q}_{V|W_2} \atop \in \mathbb{Q}^e_{2}(\mathcal{Q}_{W_1|U})}{\max} %\vspace{-0.4cm}
 {\mathbb{E}}
 \Big[c_1(U,V) \Big]. 
\nonumber\end{align} 
\nonumber
\theorem Let $R_1\in\R^{+}$. If $R_2=\log|\mathcal{U}|$, then %\vspace{-0.4cm}
\begin{align} %\vspace{-0.4cm}
\lim_{n\longrightarrow \infty} \Gamma^n_e(R_2)= \underset{n\in\mathbb{N}^{\star}}{\inf} \Gamma^n_e(R_2) = \Tilde{\Gamma}_e(R_2).
\end{align}

%\vspace{-2cm}
\section{Locally Cooperating Agents} Consider now that either the relay and the encoder or the relay and the decoder are cooperating. In other words, we assume that either $c_1=c_2$ or $c_2=c_3$ holds. 
\subsection{Encoder-Relay Cooperation}
Assume $c_1=c_2$ the encoder and the relay are cooperating. The encoder will reveal information using the maximal rate $R_1$. 
The set $\mathbb{Q}^s_0(R_1,R_2)$ 
 of target distributions:
    \begin{align} %\vspace{-0.4cm}
\mathbb{Q}^s_0(R_1,R_2) =& \{\mathcal{Q}_{W_1|U}; \  \  R_1 \geq I(U;W_1) \}. 
    \end{align}
Single-letter best-responses are defined by:
 \begin{align}
  &\mathbb{Q}_{3}(\mathcal{Q}_{W_1|U},\mathcal{Q}_{W_2|W_1})= \argmin_{\mathcal{Q}_{V|W_2}}\mathbb{E}%_{\mathcal{P}_U\mathcal{Q}_{W_1|U}\mathcal{Q}_{W_2|W_1}\mathcal{Q}_{V|W_2}}
  [c_3(U,V)], \\
 &\mathbb{Q}^s_{2}(\mathcal{Q}_{W_1|U})=\argmin_{(\mathcal{Q}_{W_2|W_1},\mathcal{Q}_{V|W_2})), R_2\geq I(W_1;W_2) \atop \mathcal{Q}_{V|W_2}\in\mathcal{Q}_3(\mathcal{Q}_{W_1|U},\mathcal{Q}_{W_2|W_1})}\mathbb{E}[c_2(U,V)], %\vspace{-0.4cm}
 \end{align} %\vspace{-0.4cm}
  The single-letter optimal cost $\Gamma_e^{s}(R_1,R_2)$ of the encoder is given by
 \begin{align} %\vspace{-0.4cm}
 \Gamma^{s}_e(R_1,R_2)=\underset{\mathcal{Q}_{W_1|U}\in\mathcal{Q}^s_0(R_1,R_2)}{\inf}\underset{\mathcal{Q}_{W_2|W_1},\mathcal{Q}_{V|W_2} \atop \in \mathbb{Q}^s_{2}(\mathcal{Q}_{W_1|U})}{\max}
 {\mathbb{E}}
 \Big[c_1(U,V) \Big]. \nonumber
\end{align} 
%\vspace{-0.4cm}
\theorem Let $(R_1,R_2)\in\R^2_{+}$. If $c_1=c_2$, then 
\begin{align} %\vspace{-0.6cm}
\lim_{n\longrightarrow \infty} \Gamma^n_e(R_1,R_2)= \underset{n\in\mathbb{N}^{\star}}{\inf} \Gamma^n_e(R_1,R_2) = \Gamma^{s}_e(R_1,R_2).
\end{align}
The proof relies on considering that the relay observes the source as the encoder can fully reveal it, and the Bayesian persuasion setting between the relay and the decoder.

%\vspace{-2cm}
\subsection{Relay-Decoder Cooperation}
Assume now that the decoder cooperates with the relay because $c_2=c_3$.
The set $\mathbb{Q}^d_0(R_1,R_2)$ 
 of target distributions:
    \begin{align} %\vspace{-0.4cm}
\mathbb{Q}^d_0(R_1,R_2) =& \{\mathcal{Q}_{W_1|U}; \  \  R_1 \geq I(U;W_1) \}. 
    \end{align}
Single-letter best-responses are defined by:
 \begin{align} %\vspace{-0.4cm}
  &\mathbb{Q}_{3}(\mathcal{Q}_{W_1|U},\mathcal{Q}_{W_2|W_1})= \argmin_{\mathcal{Q}_{V|W_2}}\mathbb{E}%_{\mathcal{P}_U\mathcal{Q}_{W_1|U}\mathcal{Q}_{W_2|W_1}\mathcal{Q}_{V|W_2}}
  [c_3(U,V)], \\
 &\mathbb{Q}^d_{2}(\mathcal{Q}_{W_1|U})=\argmin_{(\mathcal{Q}_{W_2|W_1},\mathcal{Q}_{V|W_2})), R_2\geq I(W_1;W_2) \atop \mathcal{Q}_{V|W_2}\in\mathcal{Q}_3(\mathcal{Q}_{W_1|U},\mathcal{Q}_{W_2|W_1})}\mathbb{E}[c_2(U,V)].%\vspace{-0.4cm}
 \end{align} %\vspace{-0.4cm}
  The single-letter optimal cost $\Gamma_e^{d}(R_1,R_2)$ of the encoder is given by
 \begin{align} %\vspace{-0.4cm}
 \Gamma^{d}_e(R_1,R_2)=\underset{\mathcal{Q}_{W_1|U}\in\mathcal{Q}^d_0(R_1,R_2)}{\inf}\underset{\mathcal{Q}_{W_2|W_1},\mathcal{Q}_{V|W_2} \atop \in \mathbb{Q}^d_{2}(\mathcal{Q}_{W_1|U})}{\max}
 {\mathbb{E}}
 \Big[c_1(U,V) \Big]. \nonumber %\vspace{-0.4cm}
\end{align} 
%\vspace{-0.4cm}
\theorem
\begin{align} %\vspace{-0.4cm}
\lim_{n\longrightarrow \infty} \Gamma^n_e(R_1,R_2)= \underset{n\in\mathbb{N}^{\star}}{\inf} \Gamma^n_e(R_1,R_2) = \Gamma^{d}_e(R_1,R_2).
\end{align}
The proof follows by considering the relay and the decoder as one party, and lossy source coding at the encoder.

\section{Binary Example}
Assume that $R_1=R_2=\log|\mathcal{U}|$, and $c_1=c_3$. We illustrate the problem using a binary source information $\mathcal{U}= \{u_0,u_1\}$, binary channel inputs $\mathcal{X}_2=\{x^2_{0},x^2_{1}\}$,  $\mathcal{X}_1=\{x^1_{0},x^1_{1}\}$ and binary action set $\mathcal{V}=\{v_0,v_1\}$. The prior belief $\mathcal{P}_U(u_1)$ is given by the parameter $p_0\in [0,1]$. Single-letter cost functions are given in the tables below.
  \begin{table}[!htb]
   % \caption{Global caption}
%\end{table}
%\end{minipage}%
%\begin{minipage}{.5\linewidth}
\begin{minipage}{.2\linewidth}
%\begin{table}
\caption{$c_1(u,v)$}
\centering 
\begin{tabular}{|l|l|l|}
\hline
$\ $ & $v_0$ & $v_1$ \\
\hline
$u_0$ &$9$ & $0$ \\
\hline
$u_1$ & $4$ & $10$\\
\hline
\end{tabular}
\end{minipage}
\qquad 
\quad 
\begin{minipage}{.2\linewidth}
%\begin{table}
 \caption{ $c_2(u,v)$}
      \centering \begin{tabular}{|l|l|l|}
     
\hline
$\ $ & $v_0$ & $v_1$ \\
\hline
$u_0$ &$1$ & $0$ \\
\hline
$u_1$ & $1$ & $0$\\
\hline 
\end{tabular}
\end{minipage}
\qquad 
\quad 
\begin{minipage}{.2\linewidth}
%\begin{table}
 \caption{$c_3(u,v)$}
      \centering \begin{tabular}{|l|l|l|}
     
\hline
$\ $ & $v_0$ & $v_1$ \\
\hline
$u_0$ &$9$ & $0$ \\
\hline
$u_1$ & $4$ & $10$\\
\hline 
\end{tabular}
\end{minipage}
\end{table}

Let $(\alpha,\beta), (\gamma,\delta),$and $ (\epsilon,\eta) \in [0,1]^2$ %such that $\alpha+\beta \neq 1$
.
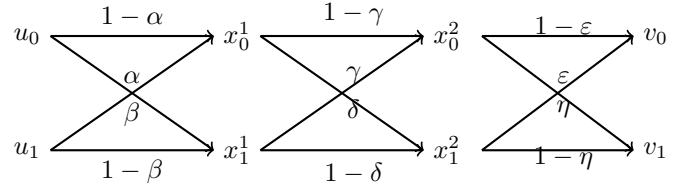
\begin{figure}[ht!]
\begin{center}
\begin{tikzpicture}[xscale=3.1,yscale=3.8, thick, empty dot/.style = { circle, draw, fill = white!0,
                           inner sep = 0pt, minimum size = 2.8pt },
     filled dot/.style = { empty dot, fill = black},
      green dot/.style = { empty dot, fill = green},
     red dot/.style = { empty dot, fill = red}
   ]
   \def\r{3}
%\draw[thick,->](-0.1,1)--(1.1,1);
\draw[thick,->](0.4,0.7)--(1.1,0.3);
\draw[thick,->](0.4,0.7)--(1.1,0.7);
%\draw[thick,->](-0.1,1)--(1.1,0);
%\draw[thick,->](-0.1,0)--(1.1,0);
\draw[thick,->](0.4,0.3)--(1.1,0.7);
\draw[thick,->](0.4,0.3)--(1.1,0.3);
%\draw[thick,->](-0.1,0)--(1.1,1);
\node [above, black] at (0.75,0.5) {$\alpha$};
%\node [above, black] at (0.5,1) {$\alpha_1$};
\node [above, black] at (0.75,0.7) {$1-\alpha$};
%\node [above, black] at (0.5,0.5) {$\alpha_4$};
\node [below, black] at (0.75,0.5) {$\beta$};
\node [below, black] at (0.75,0.3) {$1-\beta$};
%\node [below, black] at (0.5,0) {$\beta_4$};
%\node [above, black] at (0.5,0.35) {$\beta_1$};
\node [left, black] at (0.4,0.7) {$u_{0}$};
\node [left, black] at (0.4,0.3) {$u_{1}$};
%\node [right, black] at (1.1,0) {$x_{11}$};
\node [right, black] at (1.1,0.7) {$x^1_{0}$};
\node [right, black] at (1.1,0.3) {$x^1_{1}$};
%\node [right, black] at (1.1,1) {$x_{00}$};
%\node [right, black] at (2,0) {$(y_{1,1},y_{2,1})$};
\node [right, black] at (2,0.7) {$x^2_{0}$};
\node [right, black] at (2,0.3) {$x^2_{1}$};
%\node [right, black] at (2,1) {$(y_{1,0},y_{2,0})$};
\draw[thick,->](2.25,0.3)--(2.9,0.7);
\draw[thick,->](2.25,0.7)--(2.9,0.3);
\draw[thick,->](2.25,0.3)--(2.9,0.3);
\draw[thick,->](2.25,0.7)--(2.9,0.7);
%\node [right, black] at (3,0) {$(q_{1,1},q_{2,1})$};
\node [right, black] at (2.9,0.7) {$v_{0}$};
\node [right, black] at (2.9,0.3) {$v_{1}$};
%\node [right, black] at (3,1) {$(q_{1,0},q_{2,0})$};
%\draw[thick,->](1.3,1)--(2,0.7);
%\draw[thick,->](1.3,0)--(2,0.3);
\draw[thick,->](1.3,0.3)--(2,0.3);
\draw[thick,->](1.3,0.7)--(2,0.7);
\draw[thick,->](1.3,0.3)--(2,0.7);
\draw[thick,->](1.3,0.7)--(2,0.3);
\node [above, black] at (1.7,0.38) {$\delta$};
%\node [above, black] at (1.7,0.82) {$1$};
\node [above, black] at (1.7,0.5) {$\gamma$};
\node [above, black] at (1.7,0.15) {$1-\delta$};
\node [below, black] at (1.7,0.85) {$1-\gamma$};
%\node [below, black] at (1.7,0.16) {$1$};
\node [above, black] at (2.6,0.5) {$\varepsilon$};
\node [above, black] at (2.6,0.2) {$1-\eta$};
\node [below, black] at (2.6,0.8) {$1-\varepsilon$};
\node [above, black] at (2.6,0.38) {$\eta$};
\end{tikzpicture}
\caption {Encoders' Joint Strategies $\sigma_1$ and $\sigma_2$ and decoder's strategy $\sigma_3$. %and $\varepsilon_2$ 
}
\label{jostrachaIII}
\end{center}
\end{figure}

Using Baye's rule, we compute the following 
\begin{align}
    &q^1_{0}=\mathcal{P}(u_1|x^2_{0})=\frac{\mathcal{P}(u_1,x^2_{0})}{\mathcal{P}(x^2_{0})}= \nonumber \\ &\frac{(\beta(1-\gamma)+(1-\beta)\delta)\cdot p_0}{(\beta(1-\gamma)+(1-\beta)\delta)\cdot p_0 + ((1-\alpha)(1-\gamma)+\alpha\delta)\cdot (1-p_0)}, \label{qqqqq10123}
   % &q^1_{1}=\mathcal{P}(u_1|x^2_{1})=\frac{\mathcal{P}(u_1,x^2_{1})}{\mathcal{P}(x^2_{1})},
    %=\frac{(\beta\gamma+(1-\beta)(1-\delta))\cdot p_0}{(\beta\gamma+(1-\beta)(1-\delta)) \cdot p_0 +((1-\alpha)\gamma+\alpha(1-\delta))\cdot(1-p_0)}
    %\label{qqqqqqqq11123}\\
     %  &q^2_{0}=\mathcal{P}(x^1_1|x^2_{0})=\frac{\mathcal{P}(x^1_1,x^2_{0})}{\mathcal{P}(x^2_{0})},%=\frac{(\alpha\delta)\cdot(1-p_0)+(1-\beta)\delta\cdot p_0}{(\beta(1-\gamma)+(1-\beta)\delta)\cdot p_0+ (\alpha\delta+(1-\alpha)(1-\gamma))\cdot (1-p_0)}
      % \label{qqqqq210123}\\
    %&q^2_{1}=\mathcal{P}(x^1_1|x^2_{1})=\frac{\mathcal{P}(x^1_1,x^2_{1})}{\mathcal{P}(x^2_{1})}.
   % =\frac{\alpha(1-\delta)\cdot(1-p_0)+(1-\beta)(1-\delta)\cdot p_0}{(\beta\gamma+(1-\beta)(1-\delta)) \cdot p_0 +((1-\alpha)\gamma+\alpha(1-\delta))\cdot(1-p_0)}.\label{qqqqqqqq211123}
    \end{align}
    Similarly, one can compute $q^1_{1}=\mathcal{P}(u_1|x^2_{1}),
 q^2_{0}=\mathcal{P}(x^1_1|x^2_{0})$ and 
 $q^2_{1}=\mathcal{P}(x^1_1|x^2_{1})$ can be computed.
%Let $p_1=\alpha(1-p_0)+(1-\beta)p_0$. Thus, $\mathcal{P}(x_0^1)=1-p_1$ and $\mathcal{P}(x_1^1)=p_1$. 
 Let $p_1(\alpha,\beta)\in[0,1]$ denote the belief parameter of the decoder about $X^1$. In other words, $p_1(\alpha,\beta)=\mathcal{P}_{X^1}(x^1_1)= \sum_{u}\mathcal{P}_{U}(u)\cdot\mathcal{P}_{X^1|U}(x_1^1|u)=(1-p_0)\alpha + p_0(1-\beta)$. Beliefs  $q^2_{0}$ and $q^2_{1}$ can be reformulated as follows:
\begin{align}
    q^2_{0}=&\frac{p_1(\alpha,\beta)\cdot\delta}{p_1(\alpha,\beta)\cdot\delta+(1-p_1(\alpha,\beta))(1-\gamma)}, \\
    q^2_{1}=&\frac{p_1(\alpha,\beta)\cdot(1-\delta)}{p_1(\alpha,\beta)\cdot(1-\delta)+(1-p_1(\alpha,\beta))\gamma}.
\end{align}
Knowing $(\alpha,\beta)$, the relay will tune $(\gamma^{\star}(\alpha,\beta),\delta^{\star}(\alpha,\beta))$ so that posteriors $(q^2_0,q^2_1)$ are an optimal splitting as follows:
\begin{align}
    \gamma^{\star}(\alpha,\beta) =& \frac{(1-q_1^2)(p_1(\alpha,\beta)-q_0^2)}{(1-p_1(\alpha,\beta))(q_1^2-q_0^2)}, \\
    \delta^{\star}(\alpha,\beta) =& \frac{q_0^2(q_1^2-p_1(\alpha,\beta))}{p_1(\alpha,\beta)\cdot(q_1^2-q_0^2)}.
\end{align}

%Therefore, encoder $1$ will plug in the respective values of $(\gamma^{\star}(\alpha,\beta),\delta^{\star}(\alpha,\beta))$ in \eqref{qqqqq10123} and \eqref{qqqqqqqq11123} and solve for $(\alpha,\beta)$ such that the splitting $(q^1_0,q^1_1)$ is optimal. 
\begin{remark}
The order of commitment is crucial in this setting. If the relay commits to a strategy $(\gamma,\delta)$ and announces it before the encoder commits to and announces a strategy, thus %the encoder's optimal strategy would be as follows, where $(q_0^1,q_1^1)$ are an optimal splitting:  
%\begin{align}
 %   \beta^{\star}(\gamma,\delta)&=\frac{q_0^1[(1-\delta)(1-q_1^1)p_0-\delta q_1^1(1-p_0)(2\gamma-1)]+q_1^1(1-2\delta)[(1-\gamma)q_0^1(1-p_0)-p_0\gamma]}{p_0[q_1^1((1-2\delta)^2(1-q_0^1))+q_0^1((2\delta-1)(1-q_1^1))]},\\
  %  \alpha^{\star}(\gamma,\delta)&=\frac{(1-\gamma)(1-p_0)(2\delta-1)(1-q_1^1)p_0+(1-\delta)(1-q_1^1)p_0-\delta q_1^1(1-p_0)}{q_1^1(1-p_0)(1-2\delta)-(1-q_1^1)(2\gamma-1)(1-p_0)(2\delta-1)p_0}.
%\end{align}
%Note that in this case, 
the problem boils down to the one tackled in \cite{jet}.
\end{remark}
\begin{remark}
If $\alpha+\beta=1$, then the source $U$ and channel's input $X^1$ are independent. In that case, the decoder will stick to its prior belief $p_0$ disregarding any information received from the relay, and play its default action $v_1$. The corresponding costs are $10\times0.4=4$ for the encoder, and $1$ for the relay.
\end{remark}

%But what is the optimal splitting $(q^2_0,q^2_1)$ for the relay? \\ 
%To answer this question, we consider the subgame given in Fig. \ref{1subgamed1d3}. 
Using the convex closure of the decoder's expected cost, we aim to find the optimal splitting $(q^2_0,q^2_1)$ for the relay. For that, we need to define the costs $ c^x_i(x^1,v), i\in\{1,2,3\}$ of all players as functions of the channel input $X_1$ and the decoder's action $V$ as follows\begin{align}
    c^x_1(x^1,v)=\sum_u\mathcal{P}_{U|X^1}(u|x^1)c_1(u,v), \ \forall x^1,v.\\
    c^x_2(x^1,v)=\sum_u\mathcal{P}_{U|X^1}(u|x^1)c_2(u,v), \ \forall x^1,v.\\
    c^x_3(x^1,v)=\sum_u\mathcal{P}_{U|X^1}(u|x^1)c_3(u,v). \ \forall x^1,v.
\end{align}
The distributions $\mathcal{P}_{U|X^1}(u_0|x^1_{0})$ and $\mathcal{P}_{U|X^1}(u_1|x^1_{1})$ computed as follows
\begin{align}
  \mathcal{P}(u_0|x^1_{0})=&\frac{\mathcal{P}(u_0,x^1_{0})}{\mathcal{P}(x^1_{0})}=\frac{(1-\alpha)\cdot(1-p_0)}{\beta\cdot p_0 + (1-\alpha)\cdot (1-p_0)}, %=\frac{(1-\alpha)\cdot(1-p_0)}{1-p_1(\alpha,\beta)}, 
  \label{qqqqqx10123}\\
 \mathcal{P}(u_1|x^1_{1})=&\frac{\mathcal{P}(u_1,x^1_{1})}{\mathcal{P}(x^1_{1})}%\frac{(1-\mathcal{P}^{\sigma}(y_{1,0}|u_1))\cdot(p_0)}{(1-\mathcal{P}^{\sigma}(y_{1,0}|u_1))\cdot(p_0)+\mathcal{P}^{\sigma}(y_{1,1}|u_0)\cdot(1-p_0)} \nonumber \\ 
    =\frac{(1-\beta)\cdot p_0}{(1-\beta) \cdot p_0 +\alpha\cdot(1-p_0)}%=\frac{(1-\beta)\cdot p_0}{p_1(\alpha,\beta)},
    .\label{qqqqqqqqx11123}
    \end{align}

\begin{figure}[ht!]
\centering
\begin{tikzpicture}[xscale=0.3,yscale=0.3]
\draw[thick,->](-0.5,0)--(11,0);
\draw[thick,->](0,-0.5)--(0,11);
\draw(0,9)--(10,4);
\draw(0,0)--(10,10);
\draw[thick](10,0)--(10,11);
\draw[dotted](4,0)--(4,11);
\node [above, black] at (9,10) {$v_1$};
\node [above, black] at (9,5) {$v_0$};
\draw[dotted](6,0)--(6,6);
\draw [ultra thick][red] (6, 6) -- (0,0);
\draw [ultra thick][red] (6, 6) -- (10,4);
\node [below, black] at (6,0) {$g$};
\node [below, black] at (10,0) {1};
\node [left, black] at (0,9) {9};
\node [below, black] at (4,0) {$p_0$};
\node [right, black] at (10,4) {4};
\node [right, black] at (10,10) {10};
\draw[thick][dotted][green](0,0)--(10,4);
\draw[dotted](0,1.6)--(4,1.6);
\node [left, black] at (0,1.6) {$\Gamma_e%=0.667
$};
\node [below, black] at (12,0) {$\mathrm{P}(u_1)$};
\end{tikzpicture}
\begin{tikzpicture}[xscale=0.3,yscale=0.3]
\draw[thick,->](-0.5,0)--(11,0);
\draw[thick,->](0,-0.5)--(0,11);
\draw(0,0)--(10,0);
\draw[thick](10,0)--(10,11);
\draw[dotted](6,0)--(6,11);
\draw[dotted](0,3.35)--(4,3.35);
\node [above, black] at (9,10) {$v_0$};
\node [above, black] at (5,0.1) {$v_1$};
\draw[dotted](4,0)--(4,11);
\draw [ultra thick][red] (6, 0) -- (10,0);
\draw [ultra thick][red] (0, 10) -- (6,10);
\node [below, black] at (6.5,0) {$g$};
%%\node [below, black] at (10,0) {1};
\node [below, black] at (4,0) {$p_0$};
\node [right, black] at (10,10) {1};
%\node [below, black] at (0,0) {$q^1_1$};
\node [left, black] at (0,3.35) {$C_2^{\star}%=0.667
$};
\draw[thick][dotted][green](6,0)--(0,10);
\node [below, black] at (12,0) {$\mathrm{P}(u_1)$};
\end{tikzpicture}
\begin{tikzpicture}[xscale=0.3,yscale=0.3]
\draw[thick,->](-0.5,0)--(11,0);
\draw[thick,->](0,-0.5)--(0,11);
\draw(0,9)--(10,4);
\draw(0,0)--(10,10);
\draw[thick](10,0)--(10,11);
\draw[dotted](4,0)--(4,11);
\node [above, black] at (9,10) {$v_1$};
\node [above, black] at (9,5) {$v_0$};
\draw[dotted](6,0)--(6,6);
\draw [ultra thick][red] (6, 6) -- (0,0);
\draw [ultra thick][red] (6, 6) -- (10,4);
\node [below, black] at (6,0) {$g$};
\node [below, black] at (10,0) {1};
\node [left, black] at (0,9) {9};
\node [below, black] at (4,0) {$p_0$};
\node [right, black] at (10,4) {4};
\node [right, black] at (10,10) {10};
\node [below, black] at (12,0) {$\mathrm{P}(u_1)$};
%\draw[thick,->](-0.5,0)--(11,0);
%\draw[thick,->](0,-0.5)--(0,11);
%\draw(0,9)--(10,4);
%\draw(0,0)--(10,10);
%\draw[thick](10,0)--(10,11);
%\draw[dotted](8,0)--(8,11);
%\node [above, black] at (9,10) {$v_1$};
%\node [above, black] at (9,5) {$v_0$};
%\draw[dotted](6,0)--(6,6);
%\draw [thick][red] (6, 6) -- (0,0);
%\draw [thick][red] (6, 6) -- (10,4);
%\node [below, black] at (6,0) {$\gamma$};
%\node [below, black] at (10,0) {1};
%\node [left, black] at (0,9) {9};
%\node [below, black] at (8,0) {$p_0$};
%\node [right, black] at (10,4) {4};
%\node [right, black] at (10,10) {10};
%\node [below, black] at (12,0) {$\mathrm{P}(u_1)$};
%%\node [below, black] at (5,-1.5) {Receiver's Expected Utility with $p_0=0.4, \ \gamma =0.6$};
\end{tikzpicture}
%%\subcaption 
\\
\hspace{1cm}\caption{Expected cost functions with $p_0=0.4$, $g=0.6$, $C_2^{\star}=0.33$ and $\Gamma_e=1.6$ for large enough rates $R_1, R_2 \geq \log|\mathcal{U}|=1$.}
\label{fig:encoderexpected123}
\end{figure}
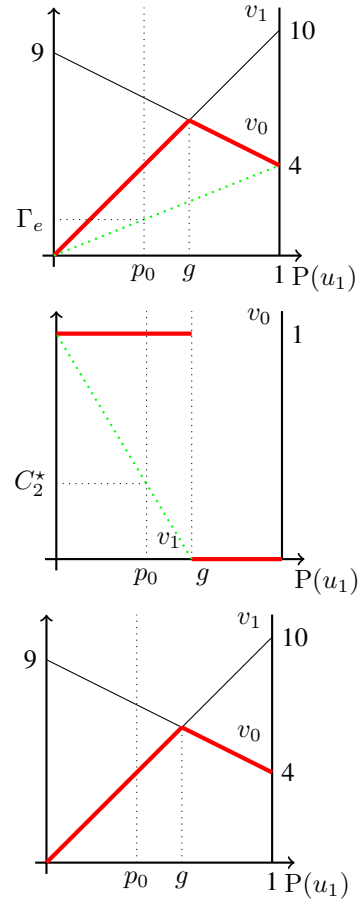

    \begin{figure}
        \centering
    \includegraphics[width=10cm]{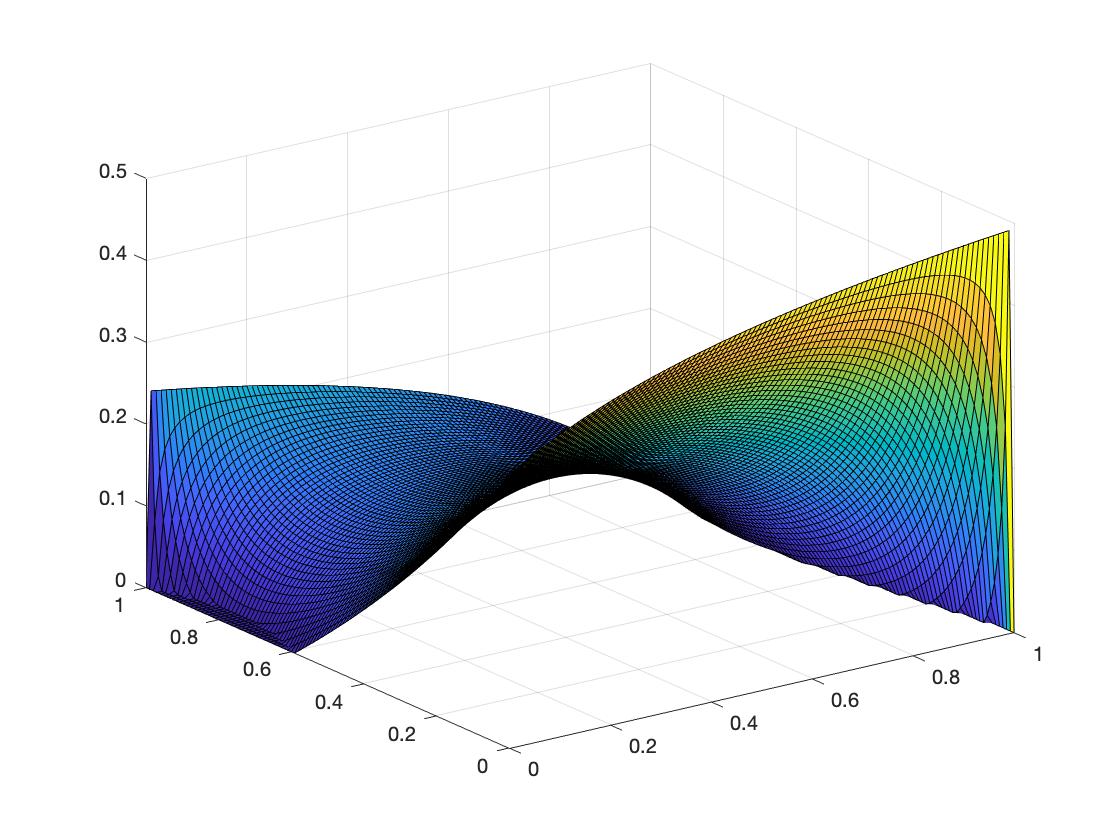}
   \caption{\textit{The relay's optimal cost $C_2^{\star}(\alpha,\beta)$ for $p_0=0.4$ and $\alpha,\beta \in [0,1]$.}}
    \label{fig:my_label}
   \end{figure}

%For a given $(\alpha,\beta)$, the prior belief of the decoder in the subgame is denoted by the parameter $p_1(\alpha,\beta) \in [0,1]$ such that $p_1(\alpha,\beta)=\mathcal{P}_{X^1}(x^1_1)= \sum_{u}\mathcal{P}_{U}(u)\cdot\mathcal{P}_{X^1|U}(x_1^1|u)=(1-p_0)\alpha + p_0(1-\beta)$. \\
%Single-letter costs $c^x_2(x^1,v)$ and $c^x_3(x^1,v)$ of the relay and decoder $3$ can be reformulated as functions of the channel input $W_1$ and decoder's action $V$ as follows
%\begin{align}
 %   c^x_i(x^1,v)=\sum_{u}\mathcal{P}_{U|W_1}(u|x^1)c_i(u,v), \qquad i \in \{2,3\}.
%\end{align}
%By setting $(1-g) \cdot (9\mathcal{P}(u_0|x^1_0)+4(1-\mathcal{P}(u_0|x^1_0))) + g\cdot( 4\mathcal{P}(u_1|x^1_1)+(1-\mathcal{P}(u_1|x^1_1))9)= g \cdot 10\mathcal{P}(u_1|x^1_1) +(1-g)\cdot(1-\mathcal{P}(u_0|x^1_0)10)$, we get 
The threshold $g(\alpha,\beta)$ at which the decoder changes action is computed as follows \begin{align}
    g(\alpha,\beta)&=\frac{2-\mathcal{P}(u_0|x^1_0)\cdot 5}{5\cdot(1-\mathcal{P}(u_1|x^1_1)-\mathcal{P}(u_0|x^1_0))}% \\ &= 
    %\frac{\frac{(1-\alpha)\cdot(1-p_0)}{\beta\cdot p_0 + (1-\alpha)\cdot (1-p_0)}\cdot 3}{2 \frac{(1-\beta)\cdot p_0}{(1-\beta) \cdot p_0 +\alpha\cdot(1-p_0)}+3\frac{(1-\alpha)\cdot(1-p_0)}{\beta\cdot p_0 + (1-\alpha)\cdot (1-p_0)}} 
  %  \\ &= 
   % \frac{(1-\alpha)\cdot(1-p_0)\cdot((1-\beta) \cdot p_0 +\alpha\cdot(1-p_0))\cdot 3}{2\cdot(1-\beta)\cdot p_0\cdot(\beta\cdot p_0 + (1-\alpha)\cdot (1-p_0))+(1-\alpha)\cdot(1-p_0)\cdot((1-\beta) \cdot p_0 +\alpha\cdot(1-p_0))\cdot3}
\end{align}
We define the single-letter cost of the encoder and the relay as a function of the belief parameter $q \in [0,1]$ about $X^1$ and for threshold $g$ as follows:
\begin{align}
  c^x_1(q)=&\sum_{x^1}q(x^1)c_1^x(x^1,v^{\star}(q(x^1)))%=\mathbbm{1}_{\{q<g(\alpha,\beta)\}}\cdot [q(x^1_1)\cdot\mathcal{P}(u_1|x_1^1)\cdot10] +\nonumber\\ &\mathbbm{1}_{\{q\geq g(\alpha,\beta)\}}\cdot [q(x_0^1)\cdot\mathcal{P}(u_0|x_0^1)\cdot 9+q(x_1^1)\cdot\mathcal{P}(u_1|x_1^1)\cdot 4]
  , \\  
   c^x_2(q)=&\sum_{x^1}q(x^1)c_2^x(x^1,v^{\star}(q(x^1))) %= \mathbbm{1}_{\{q<g(\alpha,\beta)\}}\cdot [ q(x_0^1)\cdot\mathcal{P}(u_0|x_0^1)+q(x_1^1)\cdot\mathcal{P}(u_1|x_1^1)]
   ,%  c_2(q)=\mathbbm{1}_{\{q<g\}}.
\end{align}
where \begin{align}
    v^{\star}(q(x^1))=\argmin_{v}\sum_{x^1}q(x^1)c_3^x(x^1,v).
\end{align}
%We can derive a similar definition for the single-letter cost $c_2^x(p)$ where $p \in [0,1]$ denotes the belief parameter about $W_1$,
%\begin{align}
 %   c^x_2(p)=\sum_{u}\mathcal{P}_{U|W_1}(u|x^1)\mathbbm{1}_{\{p<g\}}.
%\end{align}
%Hence, the cost $C_1(0.3)= \lambda^1_0c_1(q^1_0)+\lambda_1^1c_1(q^1_1) = 0$.

For a given $(\alpha,\beta)$, %\lambda_{x^2} \in \Delta(\mathcal{X}^2 and q_{x^2}\in\Delta(\mathcal{X}^1). 
 the optimal cost of the relay can be computed using the convexification method as follows:
\begin{align}
    C_2^{\star}(\alpha,\beta)=&\inf_{(\lambda,q)_{x^2}}\Big\{\sum_{x^2}\lambda_{x^2}c^x_2(q_{x^2}), \  \nonumber \\ &\sum_{k}\lambda^2_k =1,  \ \sum_{x^2}\lambda_{x^2}q_{x^2} =p_1(\alpha,\beta) \Big\}.
\end{align}
%We denote the set of optimal splittings $(\lambda^2,p^2)_n$ for a pair $(\alpha,\beta)$ by $\mathcal{L}(\alpha,\beta)$.
%Define the single-letter cost of the encoder $c_1(q)$ for a belief parameter $q\in[0,1]$ as follows
%\begin{align}
 %  c_1(q)= \begin{cases}
  %  10q, \quad &\textrm{if } q\leq \gamma \\
   % 9(1-q)+4q, \quad &\textrm{if } q >\gamma
%    \end{cases}
%\end{align}

%For a given $(\alpha,\beta)$, the cost that could be achieved by the encoder is derived as follows:
%\begin{align}
 %   C_1(\alpha,\beta)=%\sum_{w}\lambda_w^1c_1(q_w^1), \  \quad \sum_{w}\lambda_w^1=1,  \ \sum_{w}\lambda_w^1 q_w^1 =p_0. %\inf_{(\lambda^2,p^2)_w \in \mathcal{L}(\alpha,\beta)}\{\sum_{w}\lambda_w^1(\lambda_w^2)c_1(q_w^1(p_w^2)), \  \quad \sum_{w}\lambda_w^1(\lambda^2_w) =1,  \ \sum_{w}\lambda_w^1(\lambda_w^2)q_w^1(p_w^2) =p_0 
    %\}.
%\end{align}

The optimal single-letter cost of the encoder is therefore given by
\begin{align}
    \Gamma_e^{\star}(R_1,R_2)=\inf_{\alpha,\beta} \Big\{\sum_{x^2}\lambda_{x^2}&c_1^x(q_{x^2}), \nonumber \\  &(\lambda_{x^2},q_{x^2})_{x^2} \in \argmin_{(\lambda_{x^2},q_{x^2})_{x^2}} C_2^{\star}(\alpha,\beta)\Big\}.
\end{align}

    %\centering
    %\includegraphics[width=10cm]{Topc1.jpg}
    %\captionof{figure}{\textit{Encoder's optimal strategy region $(\alpha,\beta)$.}}
    %\label{fig:my_label}

%    \centering
 %   \includegraphics[width=10cm]{PROFILEc1.jpg}
  %  \captionof{figure}{\textit{fig.5 Encoder's cost values with respect to $(\alpha,\beta)$. %Lowest at $2.2001=\Gamma_e^{\star}$ corresponds to $\alpha^{\star} =0.99$ and $\beta^{\star}=0.97$.
   % }}
    %\label{fig:my_label}

\bibliographystyle{IEEEtran}
\bibliography{biblio}

\end{document}